\documentclass[11pt,a4paper]{article}

\usepackage{hyperref}

\usepackage{fullpage}

\usepackage{natbib}

\usepackage{pdflscape} 
\usepackage{enumerate}
\usepackage[usenames, dvipsnames]{xcolor} 
\usepackage{amsfonts} 
\usepackage{dsfont} 
\usepackage{amsmath} 
\usepackage{amsthm} 
\usepackage[final]{graphicx}
\usepackage{bm} 
\usepackage{ifthen} 
\usepackage{mathtools} 
\usepackage{multirow}
\usepackage[section]{placeins} 
\usepackage{float} 
\usepackage{url}

\usepackage{fancyhdr}
 
\pagestyle{fancy}
\fancyhead{}

\fancyfoot[l]{B. B\"ottcher, Notes on the interpretation of dependence measures} 
\fancyfoot[c]{}
\fancyfoot[r]{\thepage}%

\RequirePackage{ifdraft}

\makeatletter
\def\parsetime#1#2#3#4#5\empty{#1#2:#3#4}
\def\parsedate#1:20#2#3#4#5#6#7#8\empty{20#2#3/#4#5/#6#7-\parsetime#8\empty}
\def\moddate#1{\expandafter\parsedate\pdffilemoddate{#1}\empty}
\newcommand{\timestamp}{%
\ifdraft{\textcolor{red}{\moddate{\jobname.tex}}}{\today}}
\makeatother

\makeatletter
\def\ifdraft{\ifdim\overfullrule>\z@
  \expandafter\@firstoftwo\else\expandafter\@secondoftwo\fi}
\makeatother

\newcommand{\draftremark}[1]{\ifdraft{%
   \textcolor{red}{// {#1} //}%
   \ifthenelse{\isodd{\value{page}}}{%
	\normalmarginpar\marginpar{\textcolor{red}{$\maltese$}}}{%
	\reversemarginpar\marginpar{\textcolor{red}{$\maltese$}}}}{}}


\newtheorem{theorem}{Theorem}[section]
\newtheorem*{theorem*}{Theorem}

\newtheorem{remark}[theorem]{Remark}

\newtheorem{example}[theorem]{Example}

\newtheorem{proposition}[theorem]{Proposition}
\newtheorem{properties}[theorem]{Properties}
\newtheorem{rem}[theorem]{Remark}
\newtheorem*{test*}{Test}
\numberwithin{theorem}{section}
\numberwithin{equation}{section}


\newcommand{\E}{\mathbb{E}}

\newcommand{\N}{\mathbb{N}}
\newcommand{\R}{\mathbb{R}}
\newcommand{\V}{\mathbb{V}}

\newcommand{\Prob}{\mathbb{P}}

\newcommand{\tMcor}{\overline{M\text{cor}}}
\newcommand{\cov}{\operatorname{cov}}
\newcommand{\cor}{\operatorname{cor}}
\newcommand{\dcor}{\operatorname{dcor}}
\newcommand{\dcov}{\operatorname{dcov}}

\newcommand{\ii}{\mathrm{i}}
\newcommand{\ee}{\mathrm{e}}

\newcommand{\xbar}{\overline{x}}
\newcommand{\ybar}{\overline{y}}
\newcommand{\Var}{\mathbb{V}}

\newcommand{\hN}{\mbox{}^{\scriptscriptstyle N}\kern-1.5pt}





\newcommand{\pairslr}[3]{\left#1 #2 \right#3}
\newcommand{\pairscs}[4]{{\csname#1l\endcsname #2} #3 {\csname#1r\endcsname #4}}

\newcommand{\pairs}[4][lr]{%
        \ifthenelse{\equal{#1}{}}{%
                #2 #3 #4}{%
                \ifthenelse{\equal{#1}{lr}}{\pairslr{#2}{#3}{#4}}{%
                        \pairscs{#1}{#2}{#3}{#4}}}}

\newcommand{\Expect}[2][lr]{\E\ifthenelse{\equal{#1}{lr}}{\!}{}\pairs[#1]{(}{#2}{)}}



\usepackage{mathtools}

\newcommand{\vapprox}[2]{\underset{\scriptstyle\overset{\mkern4mu\rotatebox{90}{$\,\approx$}}{#2}}{#1}}

\newcommand{\keywords}[1]{{\vspace{0.5cm}\noindent\bf Keywords:} #1}
\newcommand{\subclass}[1]{{\\\bf MSC 2010:} #1}

\begin{document}
\title{Notes on the interpretation of dependence measures\\[0.4cm]
\normalsize Pearson's correlation, distance correlation,\\ distance multicorrelations and their copula versions}
\author{Bj\"{o}rn B\"{o}ttcher\footnote{TU Dresden, Fakult\"at Mathematik, Institut f\"{u}r Mathematische Stochastik, 01062 Dresden, Germany, email: \href{mailto:bjoern.boettcher@tu-dresden.de}{bjoern.boettcher@tu-dresden.de}}}
\date{}
\maketitle

\begin{abstract}
Besides the classical distinction of correlation and dependence, many dependence measures bear further pitfalls in their application and interpretation. The aim of this paper is to raise and recall awareness of some of these limitations by explicitly discussing Pearson's correlation and the multivariate dependence measures: distance correlation, distance multicorrelations and their copula versions. The discussed aspects include types of dependence, bias of empirical measures, influence of marginal distributions and dimensions. 

In general it is recommended to use a proper dependence measure instead of Pearson's correlation. Moreover, a measure which is distribution-free (at least in some sense) can help to avoid certain systematic errors. Nevertheless, in a truly multivariate setting only the p-values of the corresponding independence tests provide always values with indubitable interpretation.

\keywords{dependence measures, measures of association, Pearson's correlation, distance correlation, distance multicorrelation, systematic errors}
\subclass{62H20, 
 62G05 
}
\end{abstract}

\section{Introduction}

Most methods of statistical inference are in some way based on detecting and quantifying dependencies of variables. Hence for a rigorous analysis it is fundamental to understand the limitations of the dependence measures involved. We will discuss several aspects of Pearson's correlation as well as the more recent distance correlation of \cite{SzekRizzBaki2007} and its extensions to multiple variables: distance multicorrelation in \cite{BoetKellSchi2019} and \cite{Boet2020} and their copula versions in \cite{Boet2020a}. These measures provide a unifying concept which include as a special case also the bivariate Hilbert-Schmidt Independence Criterion of \cite{GretBousSmolScho2005} and as a limiting case the RV-coefficient of \cite{RobeEsco1976}. Besides many concrete examples which discuss the possibilities and limitations of these measures, also new theoretic results are included. In particular, a general extended invariance for copula dependence measures based on the distributional transform is proven. Moreover, the bias corrected estimators of distance covariance are extended to distance multivariance -- this is naturally important for any empirical application of these measures, but here it has to be taken with a pinch of salt: already for distance covariance these estimators turn out to have a much larger variance than the unbiased estimators. Thus without further developments (e.g.\ using variance reduction techniques) their use can not be recommended in a small sample setting.   

In the seminal paper of \cite{Reny1959} a set of axioms is given which a dependence measure for two univariate variables should satisfy. These axioms have been discussed, adapted and extended to many settings and measures. Our aim is not to challenge these (or other) axioms, but we aim to illustrate some of their practical implications. 

Dependence is a dichotomous concept: variables are either dependent or independent. This already shows that \textit{strength of dependence} is in general not a well defined concept. A measure that offers a quantitative scale might assign values to some range: from independence to some kind of specific non-random dependence, e.g.\ linear-dependence in the case of Pearson's correlation. Alternatively, the range might just be grounded at independence without any further reference values which would provide 'if and only if' characterizations of certain types of dependence. Nevertheless, in terms of speed (in a big data setting) it is certainly desirable to compare values of measures rather than requiring a (often computational involved) derivation of p-values. 
\begin{figure}[H]
\centering
\includegraphics[width = 0.49\textwidth]{./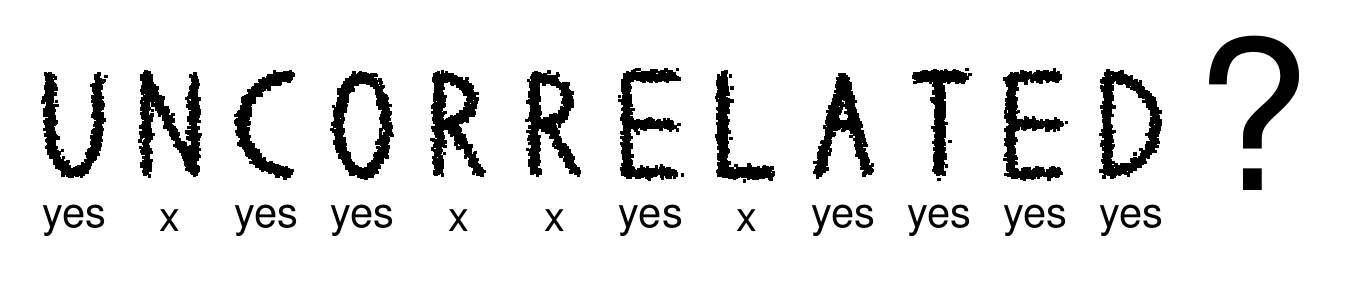}
\includegraphics[width = 0.49\textwidth]{./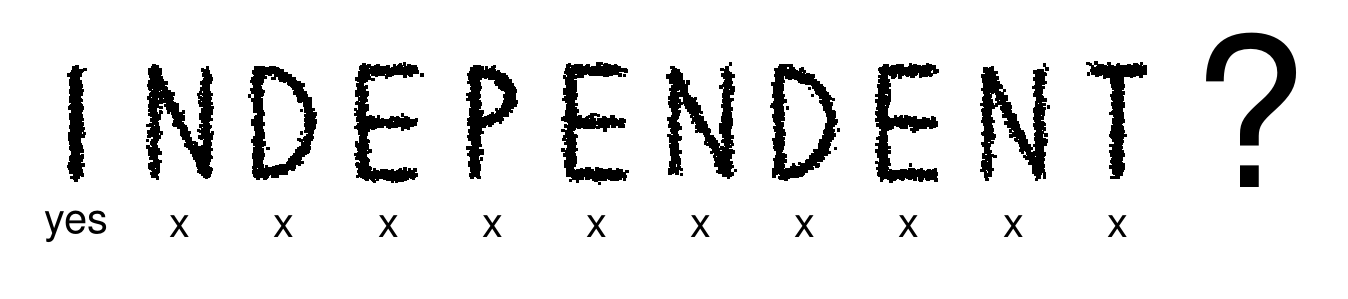}
\caption{Illustration of samples which are (un)correlated or (in)dependent - using a uniform distribution on a line representation of the letters perturbed by a bivariate normal distribution (with independent components). In this setting the only letter of the alphabet featuring independence is '\textsf{I}', but many feature uncorrelation. 
Note: A sufficient condition for uncorrelation is vertical or horizontal symmetry.\newline 
Conclusion: The use of proper dependence measures is essential to detect arbitrary dependence!}		\label{fig:uncorrelated-vs-independent}
\end{figure}
When using Pearson's correlation the most important limitation is the fact that it does not characterize independence, in many cases variables are uncorrelated but dependent, see e.g.\ Figure \ref{fig:uncorrelated-vs-independent}. Moreover, also large values appear for very different types of dependence as already illustrated in \cite{Ansc1973} (see also Figure \ref{fig:anscombe}). These aspects (with some extensions) will be discussed in Section \ref{sec:cov}. Recall that correlation is a measure for two univariate variables, in contrast distance multicorrelation is also applicable to multivariate settings.

In a truly multivariate setting all of the measures we discuss provide values which are not always suitable for direct comparisons without resorting to p-values. For distance correlation (and thus also for distance multicorrelation) a change of the marginal distributions can change the value of the measure systematically, see Example \ref{ex:changeofdistribution}. This can be overcome by the 'distribution-free' copula versions of these measures. But in a multivariate setting these still depend on the dependence within the components of a multivariate random variable under consideration, see Example \ref{ex:changeofdependence}. Thus also the latter do not always exclusively describe the dependence of the random variables under consideration.

Based on the following discussions the copula multicorrelations can be recommended in a setting where all variables are one-dimensional (i.e., $\R$-valued) with unknown continuous distributions. In all other cases (without further assumptions) each measure can yield systematic errors. In general, when considering random vectors only the p-values provide always a meaningful comparison, obviously in this case 'strength' and 'likelihood' would become synonymous.

In the next section we recollect properties of dependence measures. Thereafter we discuss Pearson's correlation (Section \ref{sec:cov}), distance correlation (Section \ref{sec:dcov}), distance multicorrelation (Section \ref{sec:Mcor}) and the copula versions of distance multicorrelation (Section \ref{sec:cop}). 
A conclusion is formulated in Section \ref{sec:conclusion}.

In all Figures \textbf{\textsf{cor}} denotes Pearson's correlation, \textbf{\textsf{Mcor}} denotes total distance multicorrelation based on the Euclidean distance (in the bivariate case it coincides with distance correlation) and \textbf{\textsf{CMcor}} denotes the copula version of total distance multicorrelation. For the latter measures we use bias corrected estimators and a sign preserving square root (i.e., $sign(x)\sqrt{|x|}$). Furthermore, the following notation is used throughout: $i \in\{1,\ldots,n\}$, $d_i \in \N$ and $X_i = (X_{i,1},\ldots,X_{i,d_1})$ with $X_{i,k}$ being univariate random variables. Then any dependence measure $d$ is a mapping of random variables $X_1,\ldots,X_n$ to some real-valued number, which satisfies (some of the) properties discussed in the next section. 

All simulations were executed based on the package 'multivariance' in the statistical computing environment R.

\section{Properties of dependence measures} \label{sec:prop}

A dependence measure can be multivariate in two ways: on the one hand it might consider more than two random variables at once (multiple variables, i.e., $n>2$), on the other hand the considered random variables might be random vectors (multivariate marginals; i.e., $d_i>1$). A measure is \emph{truly multivariate} if it can consider multiple random vectors (multiple multivariate marginals). Note, that it is a common abuse of terminology to call random vectors also (multivariate) random variables.

In general many aspects of a dependence measure might be of interest. We begin with a collection of commonly discussed properties.
\begin{properties}[Properties to classify dependence measures] \ \newline
\begin{description}
\item[\bf domain] A dependence measure might only be defined for a certain set of random variables, e.g., well-defined for all random variables with finite non-zero variance. (Depending on the size of the domain this property might also be called \emph{universality}, or simply \emph{existence}.)
\item[\bf range] The measure might only take values in a certain set, e.g., $[-1,1]$, $[0,1]$, $[0,\infty).$
\item[\bf known values] certain values might be meaningful, i.e., necessary and/or sufficient for a certain property. Of particular interest are: 
\begin{description}
\item[\bf characterization of independence] Some value of the measure characterizes the absence of (a certain type of) dependence, e.g., $d(X_1,\ldots,X_n)=0$ if and only if $X_i, i = 1,\ldots,n$ are independent (in this particular case the measure is also called a \emph{proper dependence measure}). 
\item[\bf characterization of types of dependence] Some value of the measure characterizes a certain type of dependence, e.g., $d(X_1,X_2)=1$ if $X_1$ is a linear function of $X_2$.
\item[\bf reference values] E.g., explicitly known values in the case of normal variates with known correlation, specifically this case might also be called \emph{Gaussian conformity}.
\end{description}
\item[\bf continuity] For sequences converging to some limit (within the domain!) the values of the measure converge to the value of the limit.
\item[\bf invariances] E.g.,
\begin{itemize}
\item {\bf permutation}-invariance:  $d(X_1,\dots,X_n)=d(X_{\pi_1},\ldots,X_{\pi_n})$ for all permutations $\pi_1,\ldots,\pi_n$ of $1,\ldots,n$. (In the case of two variables this is sometimes also called \emph{symmetry}.)
\item invariance with respect to \textbf{component-wise} transformations $S_i: \R^{d_i}\to \R^{d_i}$:\\ $d(X_1,\dots,X_n) = d(S_1(X_1),\ldots,S_n(X_n)).$ 
If the functions $x\mapsto S_i(x)$ are for each $i$ either $x$ or $-x$ then this properties is also called \emph{symmetry}.
\item invariance with respect to \textbf{elements-wise} transformations $g_{i,k}:\R \to \R$:\\ $d(X_1,\dots,X_n) = d(g_{1,1}(X_{1,1}), \ldots,g_{1,d_1}(X_{1,d_1}),\ldots,g_{n,1}(X_{n,1}), \ldots,g_{n,d_n}(X_{n,d_n}))$
\item invariance with respect to changes of the (univariate) \textbf{marginal distributions},\\ e.g. $d(X_1,\dots,X_n) = d(Y_1,\dots,Y_n)$ if $X_1,\ldots,X_n$ and $Y_1,\ldots,Y_n$ have the same copula. Variants of this property might also be called \emph{distribution-free}. 
\end{itemize}
\item[\bf metric-like] e.g.\ a triangle inquality holds when introducing further variables or when splitting a random vector into its components.
\item[\bf properties of corresponding sample versions] the sample versions might inherit the properties of the measure and have further specific properties, e.g., it can be a \emph{biased} or \emph{unbiased} estimator.
\end{description}
\end{properties}

The above provides an extensive list of common properties. We will focus in particular on the invariances and reference values, hereto the terms 'distribution-free', 'component-wise' and 'element-wise' are discussed in further details in the next Remarks. In Table \ref{tab:measures} the key properties of the measures considered in this paper are collected.

\begin{rem}[\emph{distribution-free} measures] \label{rem:distribution-free}
For univariate continuous marginal distributions a measure is distribution-free if and only if it is invariant with respect to strictly increasing (element-wise) transformations. 

For marginal distributions which are not continuous the copula is not unique (here a copula is a distribution function $C$ with uniformly distributed marginals such that $C(F_1,\ldots,F_l)$ is the joint distribution function it $F_k$ are the distribution functions of the univariate elements of the marginals). To get still a measure which some might call 'distribution-free', one can fix a selection procedure for a unique copula among all possible copulas (e.g., the corresponding linear extension copula, see Section \ref{sec:copversions}). But this selection procedure can introduce systematic errors when comparing continuous and non-continuous distributions.

In case of multivariate marginal distributions, a change of this multivariate distribution will in general change the value of the measure. To our knowledge there is no general way to get invariance with respect to arbitrary multivariate marginal distributions (dependence is characterized by copulas, and copulas join ultimately only  univariate distributions). Thus a truly multivariate dependence measure can only be 'distribution-free' with respect to changes of univariate distributions.
\end{rem}

\begin{rem}[component-wise vs.\ element-wise invariance]
For translations a distinction between component-wise and element-wise invariances is superficial, they coincide. As stated in the previous remark, distribution-free measures will feature further element-wise invariances. This is often desired, but certainly there are cases where a component-wise (scale and rotation) invariance seems also natural: For given position data of particles (in 2 or 3 dimensions), it is natural to assume an invariance of the dependence with respect to translations, rotations and scale of the underlying coordinate system. 
\end{rem}

\begin{landscape}
\begin{table}
\caption{Properties of $cor$, $Mcor$ and $CMcor$. Note for the 'known values': variables are related by a 'similarity transform' if distances based on one can be obtained from the other after transforming it by some combination of component-wise translation, scaling and rotation (for the explicit statement see \cite[Equation (38)]{Boet2020}). The moment condition required for total distance multicorrelation can be relaxed in specific cases, see \cite[Supplement, Proposition 1.1.(1)]{ChakZhan2019}.
}
\label{tab:measures}
{\small
\medskip
\centering 
\bgroup
\def\arraystretch{1.2}
\begin{tabular}{|l||l|l|l|l|}
	\hline
	                    & correlation                      & distance correlation        & total distance multicorrelations                                    & copula total distance multicorrelation                                 \\ \cline{2-5}
	measure             & $cor$                            & $Mcor$ for $n=2,\psi_i(.)=|.|$   & $\overline{Mcor},$ $\overline{Mcor}.lower,$ $\overline{Mcor}.upper$ & $C\overline{Mcor}$, $C\overline{Mcor}.lower,$ $C\overline{Mcor}.upper$ \\
	                    &                                  &     & $\overline{Mcor}.unnormalized$, ${Mcor}.pairwise$                   & $C\overline{Mcor}.unnormalized$, $C{Mcor}.pairwise$                    \\ \hline\hline
	domain              & univariate $X,Y$                 & multivariate $X,Y$               & random vectors $X_i, i = 1,\ldots,n$                                & random vectors $X_i, i = 1,\ldots,n$                                   \\
	                    & $\Var{X}<\infty, \Var{Y}<\infty$ & $\E{|X|}<\infty, \E{|Y|}<\infty$ & $\E\left(| \psi_i(X_i)|^n\right)<\infty$                            & -- {no condition} --                                                   \\ \hline
	range               & $[-1,1]$                         & $[0,1]$                          & $\overline{Mcor},\overline{Mcor}.lower,Mcor.pair.
	$: $[0,1]$;      & $C\overline{Mcor},$$C\overline{Mcor}.lower$,$CMcor.pair.$ : $[0,1]$    \\
	                    &                                  &                                  & $\overline{Mcor}.upper$,$\overline{Mcor}.unn.$: $[0,\infty)$        & $C\overline{Mcor}.upper$,$C\overline{Mcor}.unn.$: $[0,\infty)$         \\ \hline
	known values        & \phantom{-}0: \textit{uncorrelated}      & 0:  independence                 & 0:  independence                                                    & 0: independence                                                        \\
	                    &                                  &                                  & \phantom{0: }pairwise independence (for $Mcor.pair.$)               & \phantom{0: }pairwise independence (for $Mcor.pair.$)                  \\
	                    & \phantom{-}1: iff increasing line         & 1: iff similarity transform      & 1: if similarity transform                                          & 1: if identical                                                        \\
	                    & -1: iff decreasing line          &                                  & \ \ (for $Mcor.unn.$,$Mcor.pair.$ with $|.|^\alpha$;                & \ \ (modulo invariances; 'iff' for $Mcor.pair.$)                       \\
	                    &                                  &                                  & \ \ 'iff' for $Mcor.pairwise$)                                         &                                                                        \\ \hline
	continuity          & yes                              & yes                              & yes                                                                 & yes                                                                    \\
	(r.v.$\neq$ const.) &                                  &                                  &                                                                     &                                                                        \\ \hline
	invariances         & permutations                     & permutations                     & permutations                                                        & permutations                                                           \\
	(w.r.t.)            & translations                     & element-wise translations        & element-wise translations                                           & element-wise translations                                              \\
	                    & positive scalings                 & component-wise scalings               & component-wise scalings                                                  & element-wise scalings                                                   \\
	                    &                                  &                                  & \ \ (for $Mcor$,$Mcor.unn.$ with $|.|^\alpha$)                      &                                                                        \\
	                    &                                  & component-wise rotations              & component-wise rotations                                                 &                                                                        \\
	                    &                                  &                                  &                                                                     & element-wise monotone transformations                                             \\ \hline
	estimator           & biased                           & biased                           & biased                                                              & biased                                                                 \\ \hline
	conditions          & perturbed linear relation        & fixed marginal distributions     & fixed marginal distributions                                        & arbitrary marginals                                                    \\
	for sensible use    & between $X$ and $Y$              & \ \ (modulo invariances)         & \ \ (modulo invariances)                                            & \ \ (univariate and continuous)                                        \\ \hline
	comments            & no characterization              & systematic errors possible       & systematic errors possible                                          & systematic errors possible                                             \\
	                    & of independence                  & when varying marginals           & when varying marginals                                              & for non-continuous distributions                                       \\ \hline
\end{tabular}
\egroup
}
\end{table}
\end{landscape}

\section{Covariance and Pearson's correlation} \label{sec:cov}
Let $X$ and $Y$ be univariate random variables then covariance ($\cov$) and Pearson's correlation ($\cor$) are defined by
\begin{align}
\label{def:cov} \cov(X,Y) &= \E((X-\E(X))(Y-\E(Y))) = \E(XY) - \E(X)\E(Y),\\
\cor(X,Y) &= \frac{\cov(X,Y)}{\sqrt{\cov(X,X)\cov(Y,Y)}}. 
\end{align}
where correlation is well defined if the variances $\V(X)$ and $\V(Y)$ (i.e., $ \cov(X,X)$ and $\cov(Y,Y)$)  are finite and non-zero. The corresponding empirical estimator of $\cor(X,Y)$ for samples $(x^{(i)},y^{(i)})_{i =1,\ldots,N}$ of $(X,Y)$ is, denoting the mean by $\xbar  = \frac{1}{N}\sum_{i=1}^N x^{(i)}$, given by
\begin{equation} \label{eq:emp-cor}
\frac{\sum_{i=1}^N(x^{(i)}-\xbar)(y^{(i)}-\ybar)}{\sqrt{ \sum_{i=1}^N(x^{(i)}-\xbar)^2\ \sum_{i=1}^N(y^{(i)}-\ybar)^2}}.
\end{equation}
It is important to note that the above empirical correlation is a biased estimator for $\cor(X,Y)$, see e.g.\ \cite{ZimmZumbWill2003}. Thus in general one should be (at least a bit) careful when comparing values obtained based on varying sample sizes. 

In general, Pearson's correlation is a measure which attains values between -1 and 1. Only the values -1 and 1 have a clear interpretation: a perfect (increasing or decreasing) linear dependence of the variables. To avoid a common misconception, one also has to keep in mind the distinction of a linear relation of variables (i.e., the graph is a straight line) and a (perfect) linear dependence (i.e., the graph is a straight line which is neither horizontal nor vertical; see Figure \ref{fig:linear-relation}). For other values besides -1 and 1 correlation provides only a scale of numbers which is -- depending on the textbook -- subdivided into a scale from weak to strong (positive or negative) correlation. A value of 0 only indicates that the corresponding quantity is zero but this does not rule out any \textit{obivous} relation, as the examples in Figures \ref{fig:uncorrelated-vs-independent} and \ref{fig:classical-cor-dcor} illustrate. Moreover, a large value can appear for fundamentally different relations of the variables as the classical Anscombe-quartett, given in \cite{Ansc1973}, illustrates. In Figure \ref{fig:anscombe} we extend these classical examples to arbitrary sample sizes, to illustrate that this is a universal problem and not just related to small sample sizes.

A mathematical clear (but maybe not always useful) interpretation of correlation appears in a regression model. Suppose one tries to predict $Y$ by a linear function of $X$ (i.e., $a+bX$) using a least squares approximation. Then $cor^2(X,Y)$ is the proportion of the variance of $Y$ which is explained by the model. Note that in settings appropriate for regressions, i.e., perturbed linear relations of the variables, simulations indicate that the values of $Mcor$ and $CMcor$ behaves similar to $|cor|$, see Figure \ref{fig:linear-relation-perturbed}. 

In other contexts correlation provides just a scale from $-1$ to $1$ to which some meaning is imputed by reference values: the standard reference is the multivariate normal distribution with unit variances, where the correlation appears explicitly as a parameter in the definition of the distribution. But this reference is very deceiving, since for a multivariate normal distribution a zero correlation of components is equivalent to their independence. Thus we want to emphasize once more: In general a zero correlation says nothing about the presence of dependence, as it was illustrated e.g.\ in Figures \ref{fig:uncorrelated-vs-independent} and \ref{fig:classical-cor-dcor}. Hence, Pearson's correlation is not a proper dependence measure.   

\begin{figure}[H]
\centering
\includegraphics[width = 0.99\textwidth]{./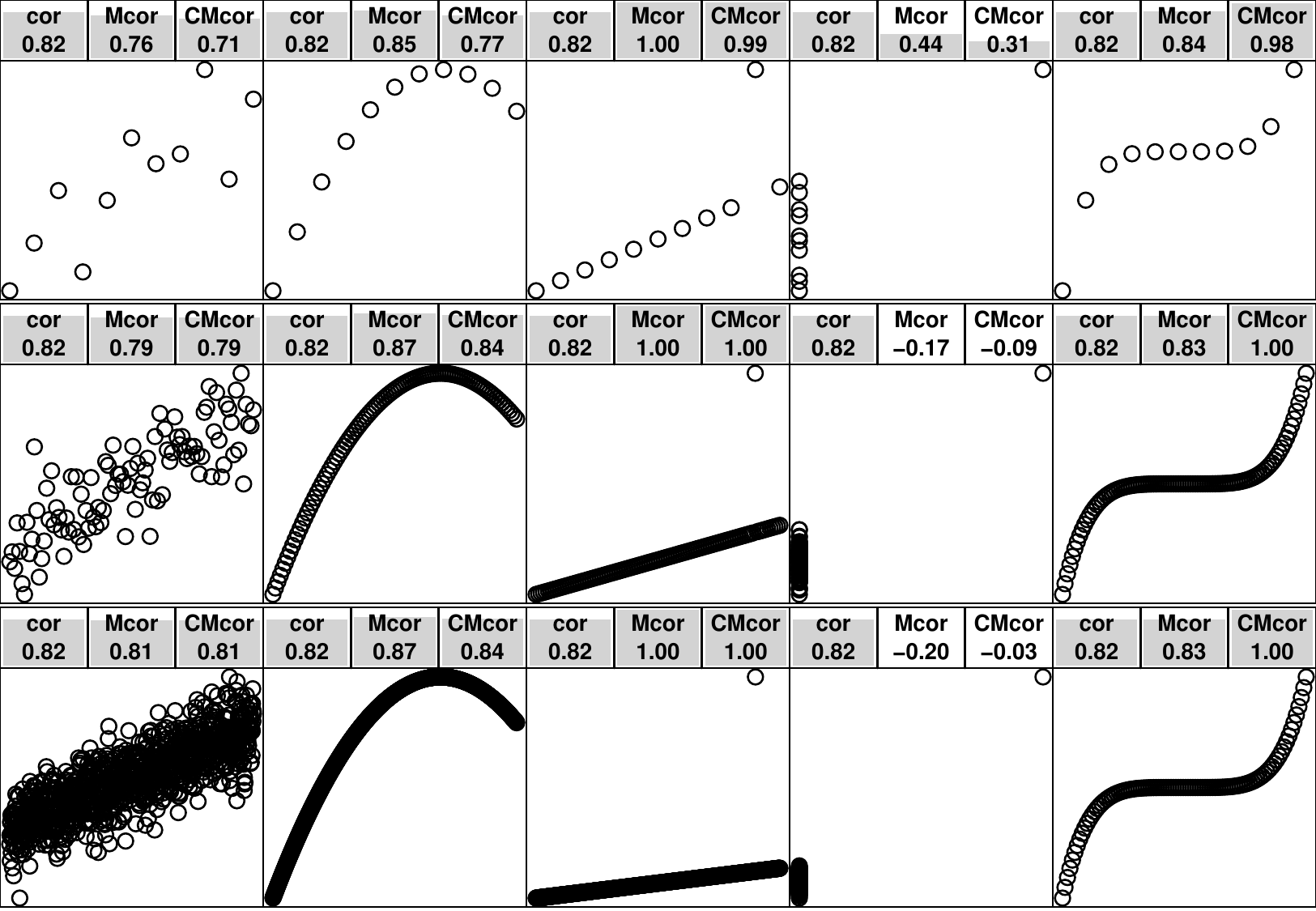}
\caption{Extended Anscombe's Quartett - the first four plots in the first row are the classical Anscombe's Quartett of \cite{Ansc1973}. We added a strictly monotone transformation as fifth column and extend all examples from the original sample size 11 (first row) to arbitrary sample sizes ($N = 100$ and 1000 are depicted; the code is provided in the online supplement). Key observation: large values of correlation ($\cor$) do not indicate a specific type of dependence. $Mcor$ and $CMcor$ quantify these dependencies differently. The third and forth column indicate also a robust behavior of these multicorrelations.}		\label{fig:anscombe}
\end{figure}

\section{Distance multivariance and its multicorrelations} \label{sec:Mcor}

For a comprehensive introduction to the framework of distance multivariance see \cite{Boet2020}. Here we focus on total distance multicorrelation, which characterizes (by a value of 0) independence directly. The special case of distance correlation, which considers only two variables instead of arbitrary many, is discussed in Section \ref{sec:dcov}. The copula variants of the measures are discussed in Section \ref{sec:copversions}.

Fundamental to the setup are so called 'real-valued continuous negative definite' functions $\psi_i: \R^{d_i} \to \R$ (in the sense of \cite{BergFors75}) for which the standard choice is the Euclidean distance $|.|$ (for many other options see e.g.\ \cite[Table 1]{BoetKellSchi2018}). 
Given $\psi_i$ there are several ways to define induced measures in the framework of distance multivariance. We will use here a representation based on expectations, more involved definitions using characteristic functions and for certain cases computational more efficient formulas do exist, see \cite{Boet2020} and the variants discussed in Section \ref{sec:multi}. Total distance multicorrelation $\overline{Mcor}$ is defined by\footnote{Here we use for simplicity a different naming convention than in \cite{Boet2020}.}
\begin{align}\label{eq:deftMcor}
\overline{Mcor}(X_1,\ldots,X_n)&: = \sqrt{\frac{1}{2^n-n-1} \sum_{\substack{1\leq i_1<\ldots < i_m \leq n\\ 2\leq m\leq n}} Mcor^2(X_{i_1},\ldots,X_{i_m})}\\
\label{eq:defMcor} &\text{with }Mcor^2(X_{i_1},\ldots,X_{i_m}) = {\E\left(\prod_{l=1}^m \frac{-\mathcal{C}_{X_{i_l}}\mathcal{C}_{X'_{i_l}}\psi_{i_l}(X_{i_l}-X'_{i_l})}{c_{i_l,m}}\right)}
\end{align}
where $(X_1',\ldots,X_n')$ is an independent copy of $(X_1,\ldots,X_n)$, $\mathcal{C}_Y Z:= Z - \E(Z\mid Y),$ and  $c_{i,m} := \E(|\mathcal{C}_{X_{i}}\mathcal{C}_{X'_{i}}\psi_{i}(X_{i}-X'_{i})|^m)^\frac{1}{m}.$ One might remove the square root in \eqref{eq:deftMcor} by considering the square of $\overline{Mcor}$, here we decided to keep the root in order to be on the same scale as Pearson's correlation, see Remark \ref{rem:Mcor-cor}. A sufficient condition for the finiteness of \eqref{eq:deftMcor} is given in Table \ref{tab:measures}.

For samples $\bm{x}^{(k)} = (x_1^{(k)},\ldots,x_n^{(k)}), k=1,\ldots,N$ of $(X_1,\ldots,X_n)$ let $D_i:=(\psi_i(x_i^{(j)}-x_i^{(k)}))_{j,k=1,\ldots,N}$ be the distance matrices of component $i= 1,\ldots,n$. Then an estimator for $Mcor^2(X_{i_1},\ldots,X_{i_m})$ is
\begin{equation} \label{eq:sampleMcor}
{\frac{1}{N^{2}} \sum_{j,k=1}^N \prod_{l=1}^m  \frac{-(CD_{i_l}C)_{j,k}}{\hat c_{i_l,m}}}
\end{equation}
with $\hat c_{i_l,m} = \sqrt[m]{\frac{1}{N^{2}} \sum_{j,k=1}^N  |(CD_{i_l}C)_{j,k}|^m},$ where $C$ is the $N\times N$ centering matrix, i.e., it has elements $C_{j,k} = \delta_{j,k}-\frac{1}{N}.$ A combination of these estimators yields an estimator for \eqref{eq:deftMcor}, which is due to the square root in \eqref{eq:deftMcor} always biased. But at least for the estimators of the individual terms one can use a bias correction.

\begin{remark}[Bias and bias correction]
The estimator introduced above for $Mcor^2$ is biased, which has two main causes: 1. the constants are also estimated, and it is (as it is well known for correlation) usually not (easily) possible to construct unbiased estimators for quotients. 2. also in the case that the constants are fixed, the estimator for the numerator is biased. 

The latter can be improved by using a \emph{bias corrected} estimator, hereto $(CD_iC)_{j,k}$ in \eqref{eq:sampleMcor} is replaced by 0 for $j=k$ and otherwise by
\begin{equation}\label{eq:biascorrection}
\begin{split}
\psi_i(x_i^{(j)}-x_i^{(k)}) &- \frac{1}{n-2} \sum_{l=1}^N \psi_i(x_i^{(l)}-x_i^{(k)})\\
&- \frac{1}{n-2} \sum_{m=1}^N \psi_i(x_i^{(j)}-x_i^{(m)}) + \frac{1}{(n-1)(n-2)} \sum_{l,m=1}^N \psi_i(x_i^{(l)}-x_i^{(m)})
\end{split}
\end{equation}
and the factor $\frac{1}{N^2}$ in \eqref{eq:sampleMcor} is replace by $\frac{1}{N(N-3)}$ (requiring $N>3$), see \cite[Definition 2]{SzekRizz2014} and \cite[Section 3.2]{ChakZhan2019}, which translate in this context directly to the general setting of distance multivariance (replacing $|.|$ with $\psi_i(.)$). 

Note, the estimator is purposely called \emph{bias corrected} instead of \emph{unbiased}, since it is only unbiased if $n = 2$ and $c_i$ is a fixed constant. For $n>2$ there is still some \textit{higher order} bias (Figures \ref{fig:bias-multi-1}-\ref{fig:bias-multi-3}). This has been discussed (for a specific $c_m$) in \cite[Section 3.2]{ChakZhan2019}. Moreover, if the constants $c_i$ are replaced by estimators the whole object is again (as in the case of Pearson's correlation) not an unbiased estimator.

For the purpose of a direct comparison of values one usually would like to be able to compare these also for varying sample sizes, thus it is generally recommended to use bias corrected versions. But here it has two notable (and usually undesired) side-effects: 1. the range of the estimator is larger than the range of the estimated measure, since the estimator can take negative values (this must happen, since otherwise an unbiased estimate of 0 would not be possible). 2. the bias corrected estimator has for small samples a much larger variance than the biased estimator (thus values become unreliable), see Figures \ref{fig:bias-2}, \ref{fig:bias-multi-1} and \ref{fig:bias-multi-2}.
\end{remark}

\begin{figure}[H]
\centering
\includegraphics[width = 1\textwidth]{./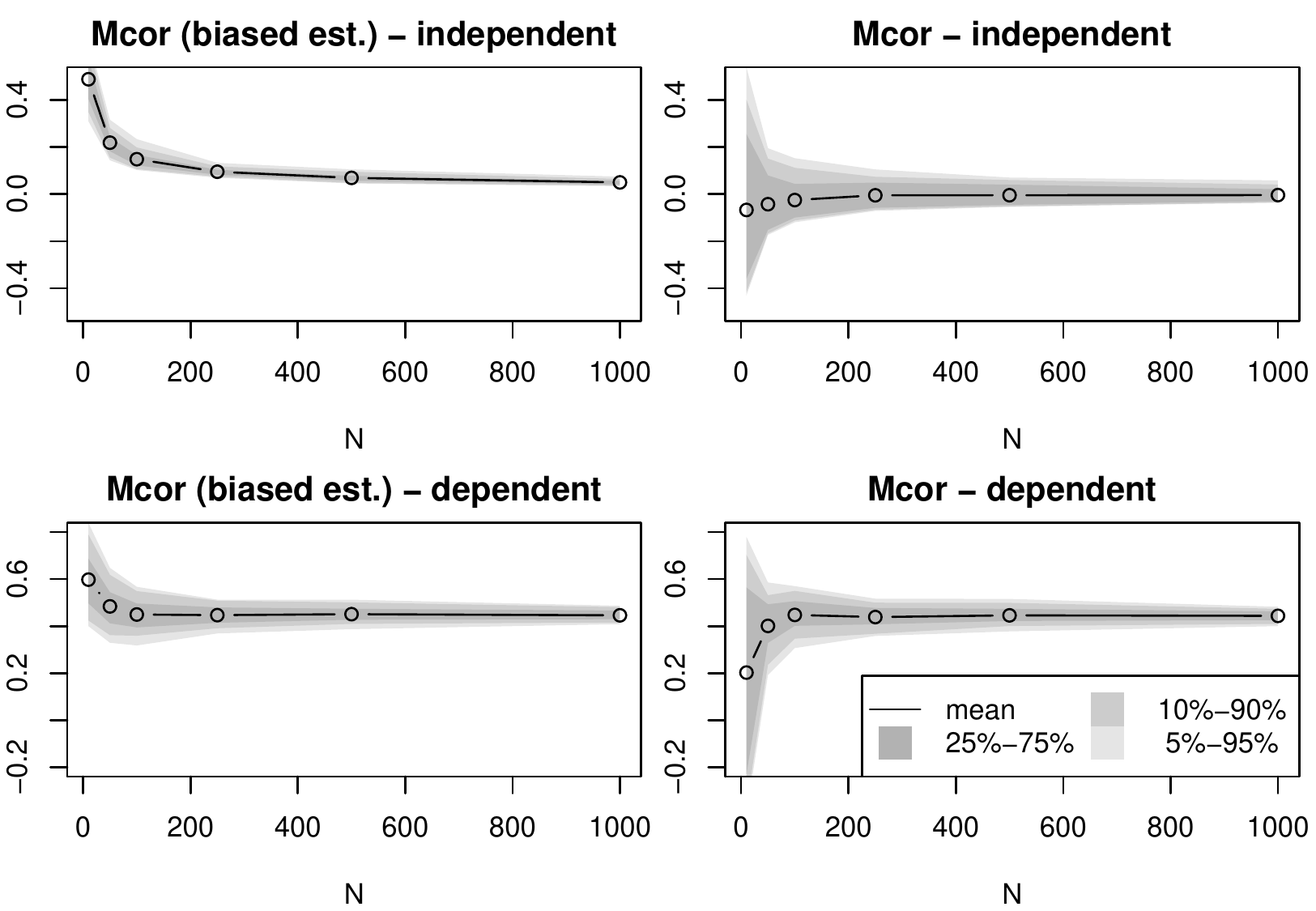}
\caption{Comparison of biased and bias corrected estimates. The data was sampled with sample size $N$ from a pair of uniformly distributed random variables which are either independent (first row) or coupled by a Gaussian copula with correlation 0.5 (second row). Observation: It is clearly visible that the bias corrected estimate have less (or different) bias, and they feature a larger variance.}		\label{fig:bias-2}
\end{figure}

\begin{remark}[Direct link of correlation and distance multicorrelation] \label{rem:Mcor-cor} Distance multivariance is not only an alternative to Pearson's correlation, in fact, it is a unifying theory which includes Pearson's correlation as a limiting case.
As discussed in \cite[Section 3.1]{Boet2020} Pearson's correlation appears as a special limiting case of distance multivariance: if $Mcor_\alpha$ is the multicorrelation corresponding to $\psi_i(.) = |.|^\alpha$ and $X$,$Y$ are univariate then
\begin{equation}
\lim_{\alpha \nearrow 2}Mcor_{\alpha}(X,Y) = |Cor(X,Y)|.
\end{equation}
But note that the limit $|.|^2$ of $\psi_i$ is not a valid function for distance multicorrelation, since there is no corresponding L\'evy measure with full support (cf.\ \cite{Boet2020}). Therefore this statement does not contradict the proper characterization of independence by distance multicorrelation. 
\end{remark}

\subsection{The case of two random variables: Distance covariance and distance correlation} \label{sec:dcov}

The special case of distance multivariance with two variables is also known as generalized distance covariance introduced in \cite{BoetKellSchi2018}. In particular for $\psi(.)=|.|$ (or $|.|^\alpha$ with $\alpha \in (0,2)$) it becomes distance covariance and distance correlation ($\dcor$) of \cite{SzekRizzBaki2007}:
\begin{align}
\tMcor(X,Y) = Mcor(X,Y) = \dcor(X,Y). 
\end{align}
Distance correlation has also the more elementary representation 
\begin{align}
\label{eq:dcor} &\dcor(X,Y) = \frac{\dcov(X,Y)}{\sqrt{\dcov(X,X) \dcov(Y,Y)}},\\
\text{where }&\dcov(X,Y) = \sqrt{\int |f_{(X,Y)}(s,t) - f_X(s)f_Y(t)|^2 \, dsdt} \\
&\phantom{\dcov(X,Y) }= \sqrt{\int |\cov(\ee^{\ii X\cdot s},\ee^{\ii Y\cdot t})|^2\, dsdt}.
\end{align}
The measure $\dcor$ is invariant with respect to rotations, translations and component-wise scalings (with a joint non-zero factor for the whole component), see \cite{MoriSzek2018}. But its values depend on the specific marginal distribution as the following examples illustrate.

\begin{example}[Influence of marginal distributions] \label{ex:changeofdistribution}
Let $(X_{norm},Y_{norm})$ be normal distributed with means 0, variances 1 and covariance $\rho$, and set $X_{unif}  = F_{norm}(X_{norm})$ and $Y_{unif} = F_{norm}(Y_{unif})$, where $F_{norm}$ is the distribution function of the standard normal distribution. Then $X_{unif}$ and $Y_{unif}$ have as distribution function the so called \emph{Gaussian copula} with correlation $\rho$.
The subscripts $exp,\ {chi},\ {bern}$ denote the corresponding random variables obtained from the uniformly distributed random variables via the (generalized) inverse distribution function of the exponential distribution with parameter 1, the Chi-squared distribution with parameter 1 and the Bernoulli distribution with paramter 0.5, respectively. Thus these random variables are exponential, Chi-squared and Bernoulli distributed with the same Gaussian copula. 

For $\rho = 0$, i.e., in the case of independence the population measure $Mcor$ becomes 0, and one might expect that the sample measures all indicate this in the same manner. But it turns out, that already in the case of independent variables the relative size of the sample measure depends in general on the prescribed marginal distributions when using the biased estimators (Figures \ref{fig:pair-freq} and \ref{fig:pair-freq-full}). For the unbiased estimators this systematic problem occurs only for $\rho>0$ (Figures \ref{fig:pair-freq-unbiased} and \ref{fig:pair-freq-unbiased-full}).

The figures only indicate an existence of a relative size difference, to illustrate this by explicit values we state some of these for the case $\rho = 0.8$ (including Bernoulli marginals which are not in the figures). The values are the means of 100 cases based on 1000 samples each: 
\begin{equation*} 
\begin{split}
\vapprox{Mcor(X_{exp},Y_{chi})}{0.73} > \vapprox{Mcor(X_{norm},Y_{chi})}{0.70}&> \vapprox{Mcor(X_{unif},Y_{chi})}{0.69}\\
&> \vapprox{Mcor(X_{exp},Y_{bern})}{0.61} > \vapprox{Mcor(X_{chi},Y_{bern})}{0.58}
\end{split}
\end{equation*}
and
\begin{equation*}
\vapprox{Mcor(X_{norm},Y_{norm})}{0.75} > \vapprox{Mcor(X_{unif},Y_{unif})}{0.74} > \vapprox{Mcor(X_{chi},Y_{chi})}{0.73} > \vapprox{Mcor(X_{bern},Y_{bern})}{0.58}. 
\end{equation*}
The figures and the inequalities show explicitly that systematic errors would occur if these dependencies are classified/ranked/ordered by comparing the values of the measure.
\end{example}

\begin{figure}[H]
\newlength{\myl}
\setlength{\myl}{0.47\textwidth}
\centering
\includegraphics[width = \myl]{./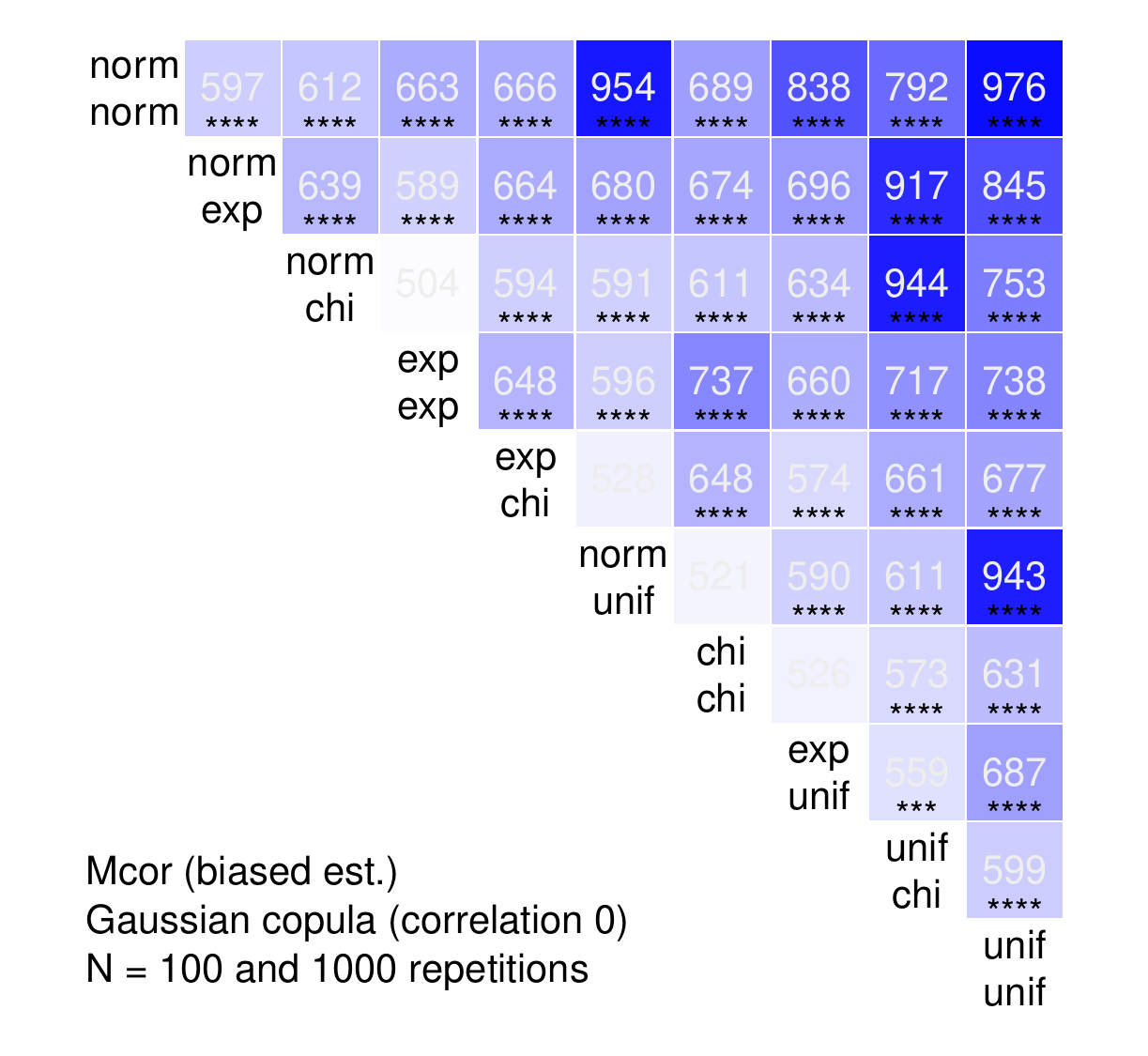}
\includegraphics[width = \myl]{./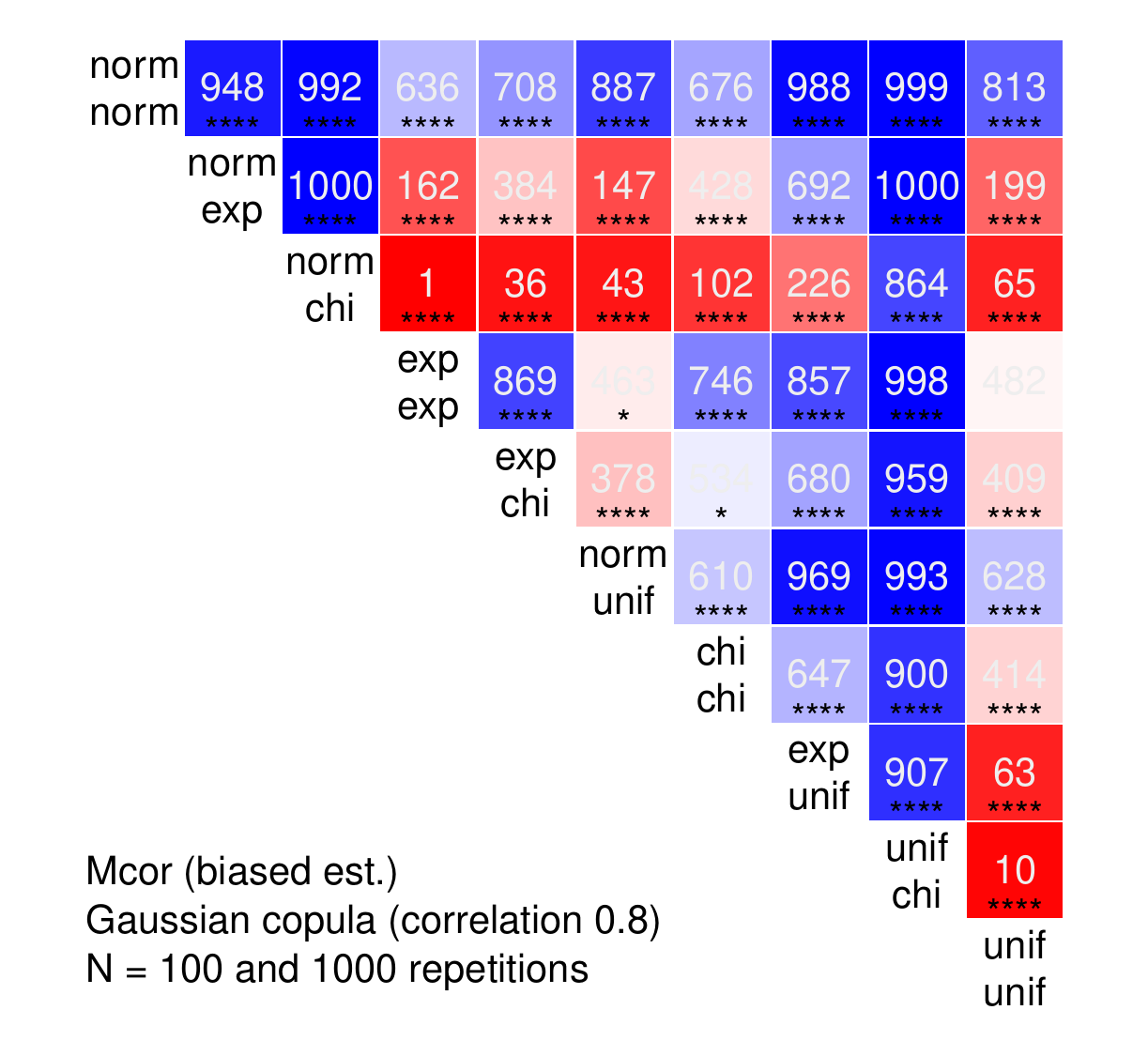}

\caption{Systematic dominance due to marginal distributions. The Figure illustrates how often $Mcor(X_{row},Y_{row})$ $>$ $Mcor(X_{column},Y_{column})$ out of 1000 simulated cases. E.g., the top right corner of the left graph illustrates for $X_{norm},Y_{norm}$ being independent normally distributed random variables and $X_{unif},Y_{unif}$ being independent uniformly distributed random variables that the sample estimate (based on a sample of size 100; using the biased estimator) of $Mcor(X_{norm},Y_{norm})$ was in 976 out of 1000 cases larger then the sample estimate of $Mcor(X_{unif},Y_{unif})$. The marginals are connected by a Gaussian copula with correlations 0 and 0.8, respectively for the left and right graph (see Figure \ref{fig:pair-freq-full} for further values). The labels indicate which marginal distributions are used. The expected number of dominances of one pair over another (if the value of the measure would not depend on the marginal distributions) is 500. The true number of dominances is printed together with the corresponding p-value in 'star notation' (*: $\leq$0.05, **: $\leq$0.01, ***: $\leq$0.001, ****: $\leq$0.0001; after adjusting for multiple tests by Holm's method). The background of the number becomes more opaque with increasing deviation from the expected value (red for more, blue for less). Observation: One clearly observes a systematic dependence on the marginal distribution. Especially (and maybe surprisingly), this is also the case for independent variables. Compare with Figures \ref{fig:pair-freq-unbiased} and \ref{fig:pair-freq-cop}.}		\label{fig:pair-freq}
\end{figure}

\begin{figure}[H]
\setlength{\myl}{0.47\textwidth}
\centering
\includegraphics[width = \myl]{./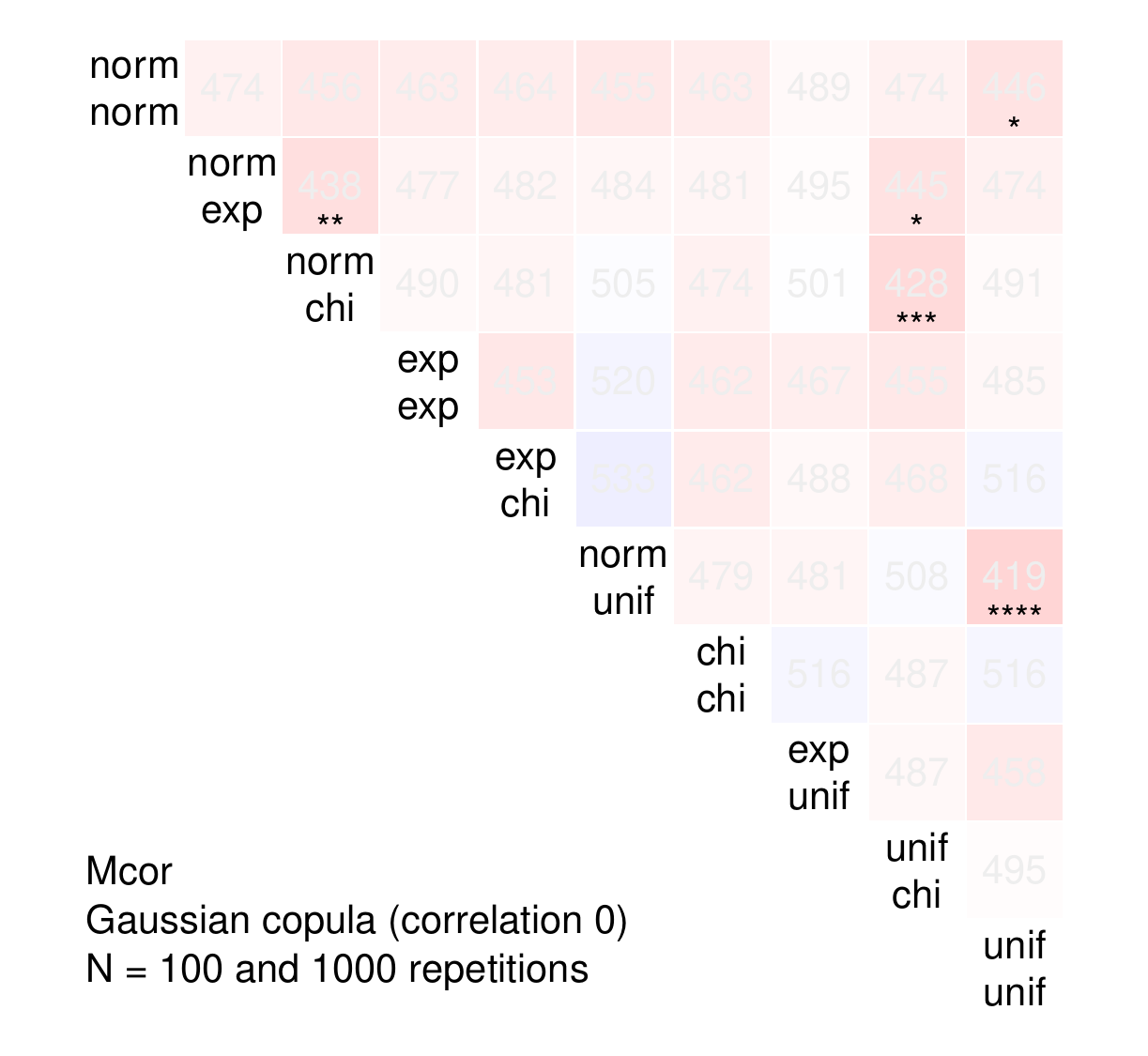}
\includegraphics[width = \myl]{./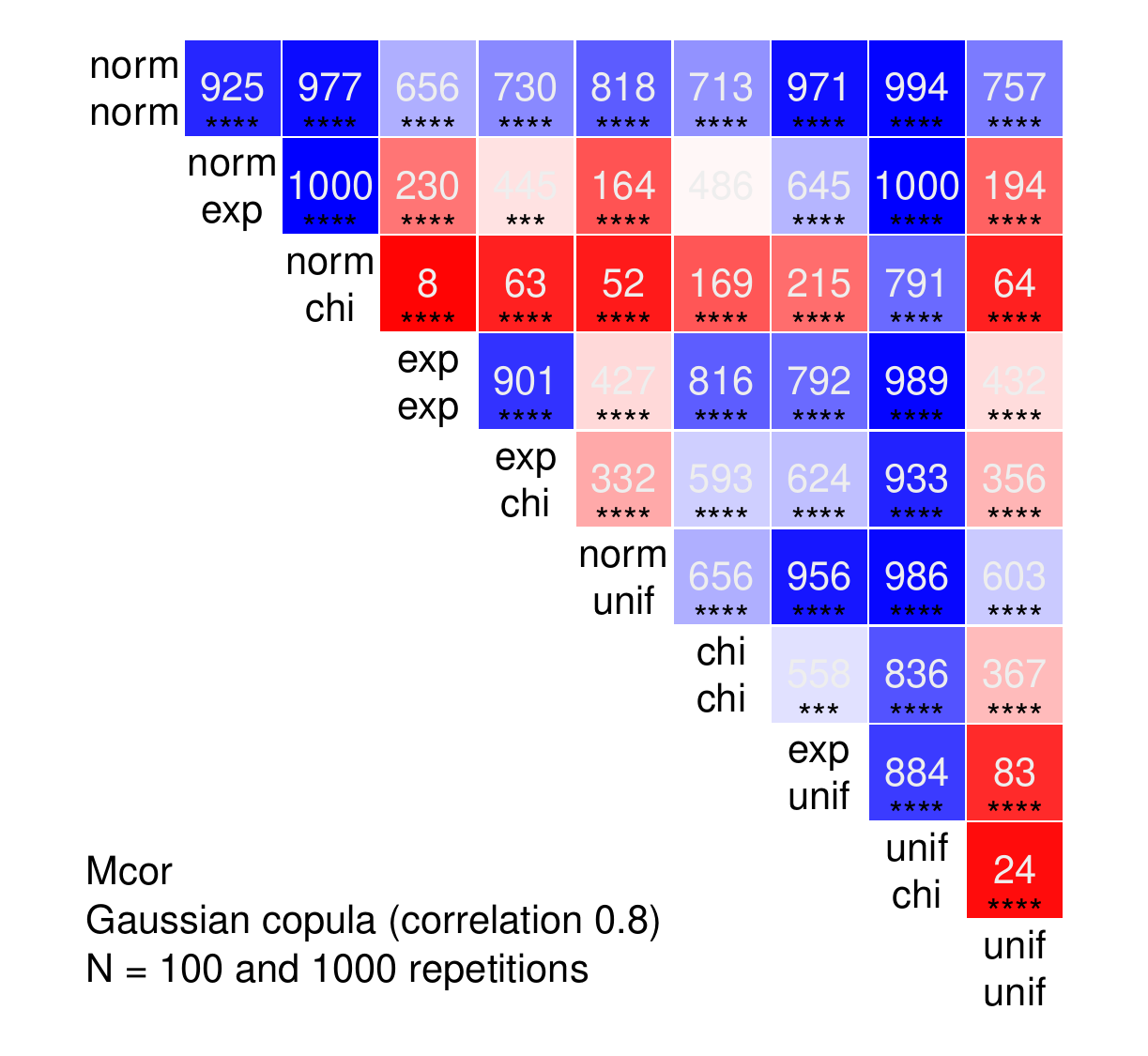}

\caption{Systematic dominance due to marginal distributions. Same as Figure \ref{fig:pair-freq} but using the bias corrected estimators, see Figure \ref{fig:pair-freq-unbiased-full} for additional parameter settings. Observation: There is in general a systematic dependence on the marginal distributions as the right graph indicates. But in the special case of independent variables this systematic problem almost disappears.}		\label{fig:pair-freq-unbiased}
\end{figure}

\begin{example}[Influence of the dependence within a margin] \label{ex:changeofdependence}
For $s\in [0,1]$ let the 3-dimensional random vector $(X_{1,norm}^{s},X_{2,norm}^{s},Y_{norm})$ be normally distributed with zero means, variances 1 and covariances $\cov(X_{1,norm}^{s},X_{2,norm}^{s}) = s$ and $\cov(X_{i,norm}^{s},Y_{norm}) = 0.5$ for $i = 1,2$. As in Example \ref{ex:changeofdistribution} the corresponding uniformly distributed random variables are denoted by the subscript "unif".  The values are the means of 100 cases based on 1000 samples each: 
\begin{equation}
\begin{split}
\vapprox{Mcor\left((X^{0}_{1,unif},X^{0}_{2,unif}),Y_{unif}\right)}{0.56} &> \vapprox{Mcor\left((X^{0.5}_{1,unif},X^{0.5}_{2,unif}),Y_{unif}\right)}{0.52} \\&> \vapprox{Mcor\left((X^{1}_{1,unif},X^{1}_{2,unif}),Y_{unif}\right)}{0.46}.
\end{split}
\end{equation}
Hence also a dependence within a random vector can systematically change the value of the measure.
\end{example}

\subsection{The case of multiple random variables} \label{sec:multi}

Multivariate dependence measures are important because they allow to (quickly) analyze many variables by global tests. Moreover, only these measures can detect higher order dependencies directly. From our point of view such testing approaches are very valuable. But to use these measures to quantify dependence directly (without testing) seems less sensible, since it summarizes many (possibly very complex) relations into a single number, making a meaningful interpretation almost impossible (see e.g.\ Figure \ref{fig:dep-structures}). Nevertheless, it provides a scale of dependence and hence we will discuss some aspects.  

\begin{figure}[H]
\setlength{\myl}{0.9\textwidth}
\centering
\includegraphics[width = \myl]{./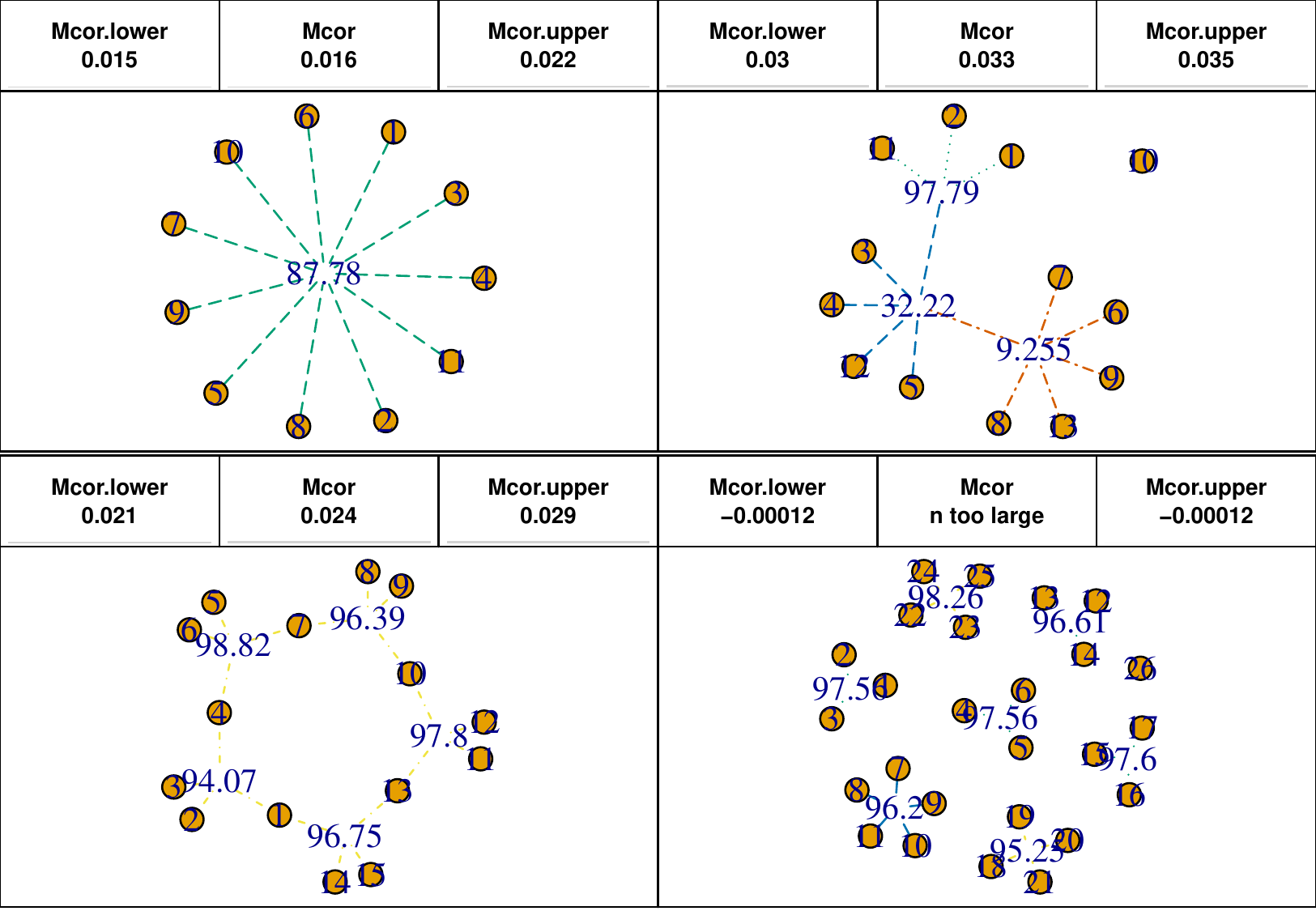}

\caption{Higher order dependence structures. Each figure shows an estimated (based on 100 samples) dependence structure and the corresponding multicorrelations. The examples are constructed explicitly in \cite[Examples 10.3,10.5,10.7,10.8]{Boet2020}, wherein also the estimation procedure is developed. Observation: The values of multicorrelation are not appropriate to distinguish the examples. In contrast to these values the (estimated) illustrations of the structure contain much more information. The values in the dependence structures denote the value of the test statistic of the connected variables (see \cite{Boet2020} for details).}		\label{fig:dep-structures}
\end{figure}

The examples of the previous section (Examples \ref{ex:changeofdistribution}, \ref{ex:changeofdependence} and Figure \ref{fig:bias-2}) indicate the limitations of the measures in the bivariate setting. These persist (obviously) also if more than two variables are considered, see e.g.\ Figures \ref{fig:bias-multi-1} and \ref{fig:bias-multi-2} for the comparison of biased and bias corrected estimators in the multivariate setting. Additionally, further difficulties arise in this setting, because properties which are desirable (range $[0,1]$; meaningful value of 1; computational fast estimator for large $n$) do actually require different extensions of the bivariate case. All are obtained by using different constants $c_{i,m}$ in \eqref{eq:defMcor}, which have in common that the resulting measures are component-wise scale invariant if $\psi_i(x_i)=|x_i|^\alpha$ for $\alpha\in(0,2)$, and all theses measure coincide for $n=2.$ 
\begin{itemize}
\item $\overline{Mcor}$ -- total multicorrelation -- was introduced in \eqref{eq:deftMcor}. It has range $[0,1]$. Its estimator is not suitable for large $n$ due to the many summands in \eqref{eq:deftMcor}. 
\item $\overline{Mcor}.lower$ and  $\overline{Mcor}.upper$ -- lower and upper bound for total multicorrelation --  use in \eqref{eq:defMcor} the constants $c_{i,m} := \E(|\Psi_i(X_i,X_i')|^2)^\frac{1}{2}$ and $c_{i,m} := \E(|\Psi_i(X_i,X_i')|^n)^\frac{1}{n},$ respectively. The lower bound was used in \cite{ChakZhan2019} and the upper bound in \cite{Boet2020}. Since the constants do not depend on $m$ these measures can be computed more efficiently, the combined estimator becomes (with the notation of \eqref{eq:sampleMcor})
\begin{equation} \label{eq:sampletMcor}
\sqrt{\frac{1}{2^n-n-1} \left(\left(\frac{1}{N^{2}} \sum_{j,k=1}^N \prod_{i=1}^n (1 - \frac{(CD_iC)_{j,k}}{c_i})\right) -1\right)}.
\end{equation}
Note that, when using the bias corrected versions (see \eqref{eq:biascorrection}) the values may actually not be 'upper' and 'lower' bounds, since in this case not only positive terms but also negative terms (which are due to the bias correction) get the same scaling.
\item $\overline{Mcor}.unnormalized$ -- unnormalized total multicorrelation -- is obtained using in \eqref{eq:defMcor} the constant
\begin{equation}
c_{i,m} := \E((\Psi_i(X_i,X_i'))^m)^\frac{1}{m} =  (M(\underbrace{X_i,\dots,X_i}_{m-times}))^\frac{1}{m}. 
\end{equation} 
This seems a very natural choice considering the bivariate representation in \eqref{eq:dcor}, and one can show that a value of 1 appears if the random variables are related by similarity transforms, see e.g.\ \cite[Section 3.6]{Boet2020}. But the measure has a possibly unbounded range and the value 1 can also occur for other cases, see Figure \ref{fig:multivariate}. Moreover, there is no computational feasible estimator for large $n$ and the estimate of the norming constant can become zero for non-constant random variables (e.g.\ for a Bernolli component with exactly $N/2$ successes and $m$ odd). Furthermore, as a consequence its estimator does not always converge properly (Figure \ref{fig:bias-multi-3}).
\item $Mcor.pairwise$ -- 2-multicorrelation or pairwise multicorrelation -- was introduced in \cite[Equation (56)]{Boet2020} and it is obtained by fixing $m=2$ in \eqref{eq:deftMcor}, i.e., it is based exclusively on the multicorrelations of all pairs. Hence it is not a proper dependence measure, but it characterizes pairwise independence, has bounded range [0,1] and the value 1 occurs if and only if all variables are related by similarity transforms. 
\item $\overline{\mathcal{M}}$ -- normalized total multivariance -- is obtained using in \eqref{eq:defMcor} the constant $c_{i,m} = \E(\psi_i(X_i-X_i')).$ This yields the test statistic used in independence tests based on distance multivariance, nevertheless it also assigns values to dependence and might be considered as a dependence measure. The estimator of $\overline{\mathcal{M}}^2$ scaled by $N$ is non-negative and has under the assumption of independence (for non-constant random variables) unit expectation. Large values (in comparison to its expectation 1) indicate dependence, and (in case of dependence) diverge with increasing sample size. The values can be translated 1-to-1 to (possibly very) conservative p-values. Thus at least for fixed sample size they provide a (very) rough comparison. Its building blocks were used in \cite{Boet2020} (see also Figure \ref{fig:dep-structures}) to provide some basic quantification of the dependencies for detected higher order dependence structures (where the required values are readily available from the detection algorithm).  
\end{itemize}
Note, further related measures for special cases exist: e.g.\ if the variables are known to be lower-order independent then multicorrelation instead of total multicorrelation can be used to quantify the dependence (see \cite{Boet2020}). Such measures might be called conditional dependence measures in contrast to the (proper) dependence measures discussed here.

The lower and upper bound ($\overline{Mcor}.lower$ and  $\overline{Mcor}.upper$) are the only appropriate choice for a proper dependence measure in the case of many variables (large $n$), considering the computational expense of the estimators. But in general each measure has its advantages. In the next example the performance of the above measures is depicted for several settings. 

\begin{example}[Values of distance multicorrelations in a multivariate setting]\label{ex:multivariate} We consider the following three cases:
\begin{enumerate}
\item multivariate normal: The random vector $(X_1,X_2,X_3)$ is multivariate normally distributed with mean 0, variances 1 and pairwise covariances $s\in [0,1]$.
\item linear interpolation to complete dependence: Let $X,X_1,X_2,X_3$ be independent standard normal random variables then consider the random vector $s(X,X,X)+(1-s)(X_1,X_2,X_3)$ for $s\in [0,1]$. 
\item perturbed higher order dependence: Let $X_1,X_2,X_3$ be independent standard normal random variables, $Y_1,Y_2$ be independent Bernoulli variables and $Y_3$ is 1 if $Y_1$ and $Y_2$ have the same value, otherwise $Y_3$ is 0. Then $Y_1,Y_2,Y_3$ are dependent but pairwise independent (cf.\ \cite[Example 10.2]{Boet2020}). Now consider
$(Y_1,Y_2,Y_3) + s (X_1,X_2,X_3)$ for $s\in[0,1].$
\end{enumerate} 
The values of the measures for these examples are shown in Figure \ref{fig:multivariate}, based on a sample of size 100 for each setting.
\end{example} 

 \begin{figure}[H]
 \setlength{\myl}{0.99\textwidth}
 \centering
 \includegraphics[width = \myl]{./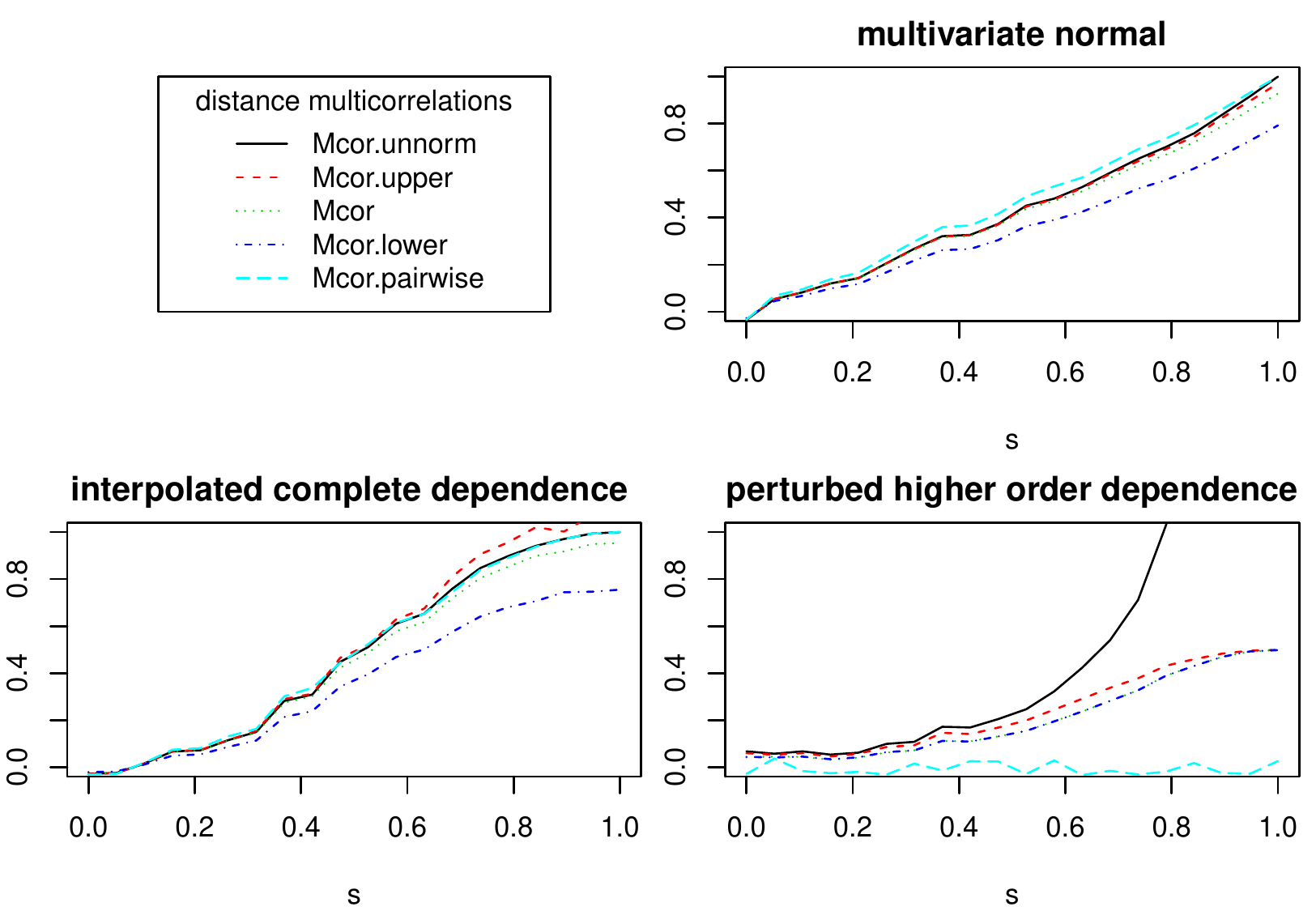}
 
 \caption{Illustrations of the values of distance multicorrelations for Example \ref{ex:multivariate}. Observation: $\overline{Mcor}.lower$ and $\overline{Mcor}.upper$ provide bounds for $\overline{Mcor}$, and $\overline{Mcor}.unnormalized$ becomes larger than 1 in the case of higher order dependence.}		\label{fig:multivariate}
 \end{figure}

\section{Copula version of distance multicorrelation} \label{sec:cop}

As discussed in the previous section the value of distance multicorrelation depends on the marginal distributions. In this section we present a version which only depends on an underlying coupla and it is therefore (for continuous marginals) not influenced by the univariate marginal distributions (see also Remark \ref{rem:distribution-free}). Hereto we recall and extend the results of \cite{Boet2020a}, which are based on the distributional transform discussed by \cite{Nes2007} and \cite{Rues2009}.

\subsection{distributional transform}

For a univariate random variable $X$ define 
\begin{equation}
T_X(x,u):= \Prob(X< x) + u \Prob(X=x) \text{ for all }x\in\R, u\in[0,1].
\end{equation} 
Let $U$ be a uniformly distributed random variable independent of $X$, then 
\begin{equation}
\text{ the distributional transform of $X$ is the random variable } T_X(X,U). 
\end{equation}
For a random vector the distributional transform is just the vector of the distributional transforms of its elements, e.g.,
\begin{equation}
T_{X_i}(X_i,U_i):= (T_{X_{i,1}}(X_{i,1},U_{i,1}),\ldots,T_{X_{i,d_i}}(X_{i,d_i},U_{i,d_i}))
\end{equation}
with $U_i = (U_{i,1},\ldots,U_{i,d_i})$ being a vector of independent uniformly distributed random variables. Note, if $X$ has a continuous distribution the distributional transform becomes just the classical transformation using the distribution function, i.e., $T_X(X,U) = F_X(X)$.
 
Based on \cite[Theorem 2.1]{Boet2020a} we know (for univariate $X$ and independent uniformly distributed $U$): $ T_X(X,U)$ is uniformly distributed, $T_{g(X)}(g(X),U) = T_X(X,U)$ for all strictly increasing functions $g$. Moreover, the key property for our context is the fact that random vectors $X_1,\ldots,X_n$ are independent if and only if $T_{X_1}(X_1,U_1),\ldots,T_{X_n}(X_n,U_n)$ are independent (where each $U_i$ is a vector of independent uniformly distributed random variables with the same dimension as $X_i$). Moreover, one can show to following identity.
\begin{proposition} \label{thm:dist-trans-properties} Let $X$ and $U$ be independent univariate random variables, $U$ be uniformly distributed and $g$ be a strictly decreasing function on the range of $X.$ Then
\begin{equation}
T_{g(X)}(g(X),U) = 1-T_X(X,1-U). 
\end{equation} 
\end{proposition}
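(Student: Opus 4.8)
The plan is to reduce the claim to a direct computation by unfolding the definition of $T$ and exploiting that a strictly decreasing $g$ reverses strict order and is injective. Write $F^-(x):=\Prob(X<x)$ for the left-continuous distribution function and $p(x):=\Prob(X=x)$ for the point mass, so that, reading $T_X(\cdot,\cdot)$ as a deterministic function evaluated at the random pair $(X,U)$, one has $T_X(X,U)=F^-(X)+U\,p(X)$ and likewise $T_X(X,1-U)=F^-(X)+(1-U)\,p(X)$. I will also use the elementary three-way split of the line, $\Prob(X>x)=1-F^-(x)-p(x)$, to re-express the upper tail that appears once the order is reversed.

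The heart of the argument is to compute the two probabilities defining $T_{g(X)}(g(X),U)$. For $y=g(x)$ with $x$ in the range of $X$, strict decrease gives the almost-sure identity $\{g(X)<g(x)\}=\{X>x\}$, and injectivity gives $\{g(X)=g(x)\}=\{X=x\}$; both hold because $X$ takes values in the range of $X$, on which $g$ is strictly decreasing. Hence at $y=g(x)$ we have $\Prob(g(X)<y)=\Prob(X>x)$ and $\Prob(g(X)=y)=p(x)$. Substituting the random argument $x=X$ into the definition of $T_{g(X)}$ and using the three-way split then yields
\begin{equation*}
T_{g(X)}(g(X),U)=\bigl(1-F^-(X)-p(X)\bigr)+U\,p(X).
\end{equation*}

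The proof closes with a one-line comparison. Regrouping the previous display gives $1-F^-(X)-(1-U)\,p(X)$, since $-p(X)+U\,p(X)=-(1-U)\,p(X)$, and this is exactly $1-T_X(X,1-U)$. The only real obstacle is bookkeeping: one must be careful to treat $T$ as a deterministic function evaluated at the random pair, to track where each probability is computed before substituting $x=X$, and to note that the order identities are only needed on the range of $X$, where $g$ is assumed strictly decreasing. No measure-theoretic subtleties arise beyond these.
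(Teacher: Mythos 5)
Your proof is correct and follows essentially the same route as the paper's: unfold the definition of $T$, use that a strictly decreasing injective $g$ turns $\{g(X)<g(x)\}$ into $\{X>x\}$ and preserves $\{g(X)=g(x)\}=\{X=x\}$, then regroup via $\Prob(X>x)=1-\Prob(X<x)-\Prob(X=x)$. The only difference is notational (your $F^-$ and $p$ versus the paper's direct probabilities), so nothing further to add.
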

\begin{proof}
Let $g$ be strictly decreasing, $x\in\R$, $u\in [0,1],$ then
\begin{align*}
T_{g(X)}(g(x),u) &= \Prob(g(X) < g(x)) + u\Prob(g(X) = g(x))\\
& = \Prob(X>x) + u \Prob(X= x)\\
&= 1 - (\Prob(X< x) + (1-u) \Prob(X=x))\\
&= 1- T_X(x,1-u). \qedhere
\end{align*} 
\end{proof}

For a dependence measure $d$ define
\begin{equation}
d_{cop}(X_1,\ldots,X_n):=d(T_{X_1}(X_1,U_1),\ldots, T_{X_n}(X_n,U_n)).
\end{equation}

\begin{theorem}\label{thm:dcop-invariance}
Let $d$ be translation invariant and symmetric in each 'element',  then $d_{cop}$ is invariant with respect to all strictly monotone transformations $g_{i,k}$, $i=1,\ldots,n$, $k= 1,\ldots,d_i$, i.e., 
\begin{equation}\label{eq:dcop-invariance}
d_{cop}(X_1,\ldots,X_n) = d_{cop} (g_1(X_1),\ldots,g_n(X_n))
\end{equation} 
where $g_i(X_i):= (g_{i,1}(X_{i,1}),\ldots,g_{i,d_i}(X_{i,d_i}))$ for each $i$.
\end{theorem}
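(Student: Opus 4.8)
The plan is to reduce the statement to a single scalar coordinate, since the distributional transform, the maps $g_{i,k}$, and both invariances assumed of $d$ all act element-wise. Because a dependence measure is a function of the joint law of its arguments, and the law of a distributional transform (its copula) does not depend on which auxiliary uniforms are used, I may evaluate both sides of \eqref{eq:dcop-invariance} with one and the same family of independent uniforms $U_{i,k}$. It then suffices to compare, coordinate by coordinate, $T_{g_{i,k}(X_{i,k})}(g_{i,k}(X_{i,k}),U_{i,k})$ with $T_{X_{i,k}}(X_{i,k},U_{i,k})$.

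First I would split each strictly monotone $g_{i,k}$ into its increasing and decreasing cases. If $g_{i,k}$ is strictly increasing, the identity $T_{g(X)}(g(X),U)=T_X(X,U)$ recalled before the proposition leaves that coordinate untouched. If $g_{i,k}$ is strictly decreasing, Proposition \ref{thm:dist-trans-properties} gives $T_{g_{i,k}(X_{i,k})}(g_{i,k}(X_{i,k}),U_{i,k}) = 1 - T_{X_{i,k}}(X_{i,k},1-U_{i,k})$, producing two features to absorb: an outer reflection $v\mapsto 1-v$ and the replacement of $U_{i,k}$ by $1-U_{i,k}$.

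Next I would remove both features. The reflection $v\mapsto 1-v$ is a negation followed by a translation by $1$, so the assumed element-wise symmetry and translation invariance together show that $d$ is unchanged when any one coordinate is replaced by one minus itself; this strips the outer $1-$ and leaves $T_{X_{i,k}}(X_{i,k},1-U_{i,k})$. Finally, since the $U_{i,k}$ are independent and uniform on $[0,1]$, flipping any subset of them to $1-U_{i,k}$ yields a family with the same joint law that is still independent of the $X_{i,k}$. Hence the tuple obtained after both reductions has the same joint distribution as $(T_{X_1}(X_1,U_1),\ldots,T_{X_n}(X_n,U_n))$, and $d$ returns the same value.

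The step needing the most care -- and the one I expect to be the main obstacle -- is this last equality in law: I must use that $d$ sees only the joint distribution of its inputs, so that substitutions which alter the random variables pointwise still leave $d_{cop}$ invariant. It is precisely here that the independence and uniformity of the auxiliary variables are indispensable, since they are what allow several $U_{i,k}$ to be flipped simultaneously without changing the joint law. The increasing case is immediate and the decreasing identity is already furnished by Proposition \ref{thm:dist-trans-properties}, so the genuine content lies in fusing symmetry, translation invariance, and the invariance of the uniform law into one distributional identity.
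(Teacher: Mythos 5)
Your proposal is correct and follows essentially the same route as the paper's proof: remove each $g_{i,k}$ via the increasing-case identity or Proposition \ref{thm:dist-trans-properties}, strip the outer reflection $v\mapsto 1-v$ using element-wise symmetry plus translation invariance, and conclude by noting that replacing any subset of the independent uniforms $U_{i,k}$ by $1-U_{i,k}$ leaves their joint law (and hence the joint law of the transformed tuple) unchanged. Your added emphasis that $d$ depends only on the joint distribution of its arguments is a point the paper uses implicitly, so nothing is missing.
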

\begin{proof}
For each element remove $g_{i,k}$ by the identities $T_{g(X)}(g(X),U) = T_X(X,U)$ and $T_{g(X)}(g(X),U) = 1-T_X(X,1-U).$ Next use the translation invariance and symmetry of $d$ to transform any $1-T_{X_{i,k}}(X_{i,k},1-U_{i,k})$ into  $T_{X_{i,k}}(X_{i,k},1-U_{i,k})$. Altogether yielding elements of the form $T_{X_{i,k}}(X_{i,k},\widetilde U_{i,k})$ with $\widetilde U_{i,k} = U_{i,k}$ if $g_{i,k}$ was increasing and $\widetilde U_{i,k} = 1-U_{i,k}$ if $g_{i,k}$ was decreasing. Finally, note that due to the independence of the uniformly distributed $U_{i,k}$ the joint distribution of $\widetilde U_{i,k}$ is identical to that of the $U_{i,k}$ and hence also the joint distribution of $T_{X_{i,k}}(X_{i,k},\widetilde U_{i,k}),k=1,\ldots,n$ is equal to the joint distribution of $T_{X_{i,k}}(X_{i,k}, U_{i,k}),k=1,\ldots,n.$ 
\end{proof}
 
Note, by Remark \ref{rem:distribution-free} a measure $d_{cop}$ with property \eqref{eq:dcop-invariance} is (for univariate continuous marginals) distribution-free. Moreover, it is also element-wise symmetric since $g_{i,k}$ in Theorem \ref{thm:dcop-invariance} can be increasing or decreasing.
 
The sample version of the distributional transform (for samples $x$ and $x^{(1)},\ldots,x^{(N)}$ of $X$ (univariate) and $u$ a sample of $U$) is given by
\begin{equation}
\hN T(x,u; x^{(1)},\dots, x^{(N)}) = \frac{1}{N} \sum_{k =1}^N \left[\bm{1}_{(-\infty, x)}(x^{(k)}) + u \bm{1}_{\{x\}}(x^{(k)})\right].
\end{equation}
In a multivariate setting each element of a vector is transformed in the same way and for each element $u$ is independently sampled from a uniform distribution, hence this is also called \emph{Monte Carlo distributional transform} (for further details see \cite[Section 2]{Boet2020a}). Figure \ref{fig:dist-trans-letter} illustrates the results of the Monte Carlo distributional transform for the data of Figure \ref{fig:letters}.

\subsection{The copula version of distance multicorrelation}
\label{sec:copversions} 

Based on the above the copula version of total distance multicorrelation is defined by
\begin{equation}
C\overline{Mcor}(X_1,\ldots,X_n):= \overline{Mcor}(T_{X_1}(X_1,U_1),\ldots, T_{X_n}(X_n,U_n)).
\end{equation}
The empirical version of it is just the known estimator of the measure applied to the Monte Carlo distributional transform of the samples. For this joint estimators the convergence was discussed in \cite{Boet2020a}.

Of the examples in the previous sections Example \ref{ex:changeofdependence} and Figure \ref{fig:bias-2} apply directly also to the copula version, since therein all univariate marginals are uniformly distributed. Hence, the value of the measure depends on the dependence within a multivariate margin and its bias corrected estimator has a large variance for small samples.

Nevertheless, the key reason to introduce the copula versions was the systematic influence of marginal distributions on values of total distance multicorrelation which was discussed in Example \ref{ex:changeofdistribution}. Now Figures \ref{fig:pair-freq-cop} and \ref{fig:pair-freq-cop-full} illustrate that the copula version overcomes this shortcoming.

\begin{figure}[H]
\setlength{\myl}{0.47\textwidth}
\centering
\includegraphics[width = \myl]{./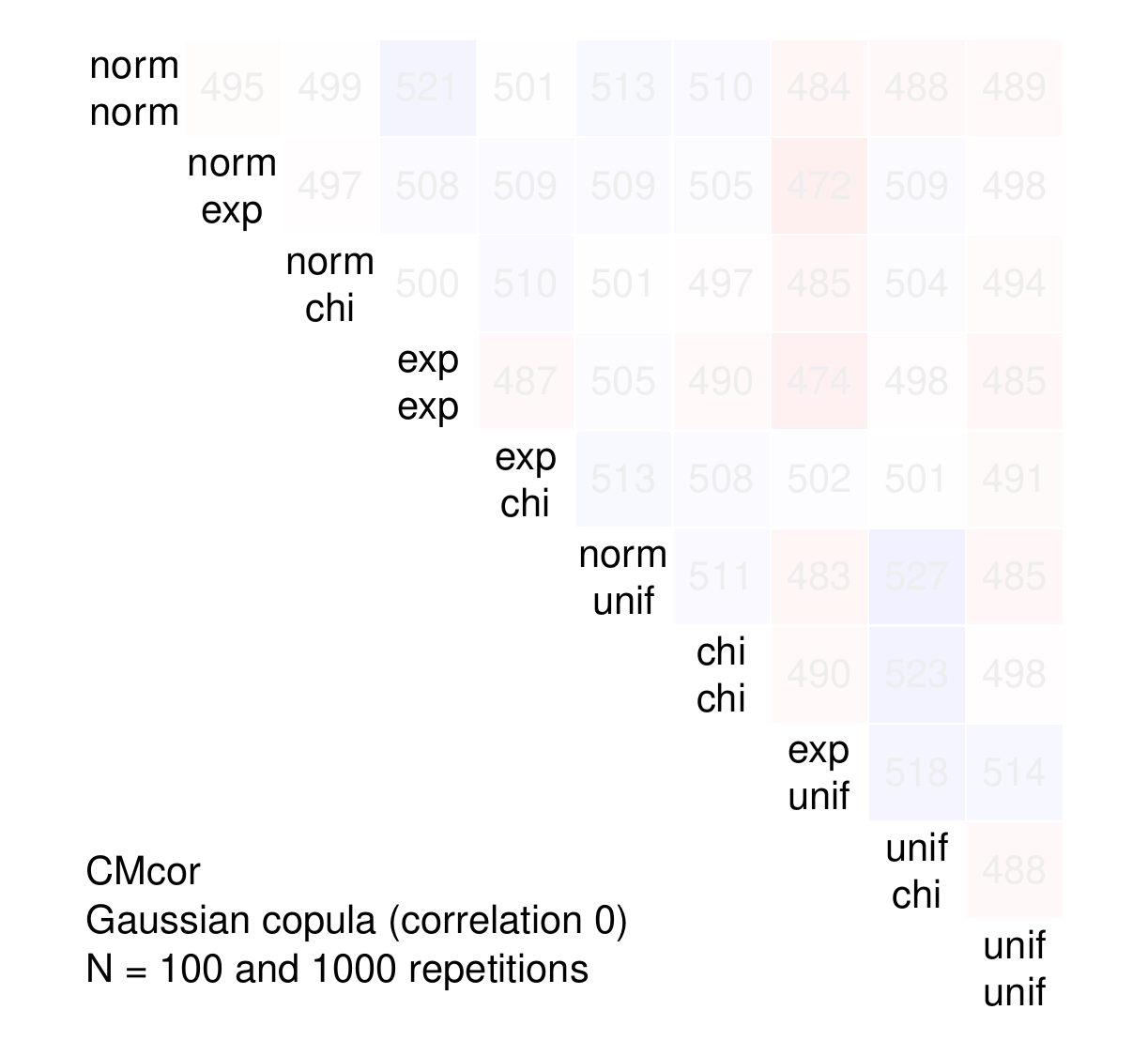}
\includegraphics[width = \myl]{./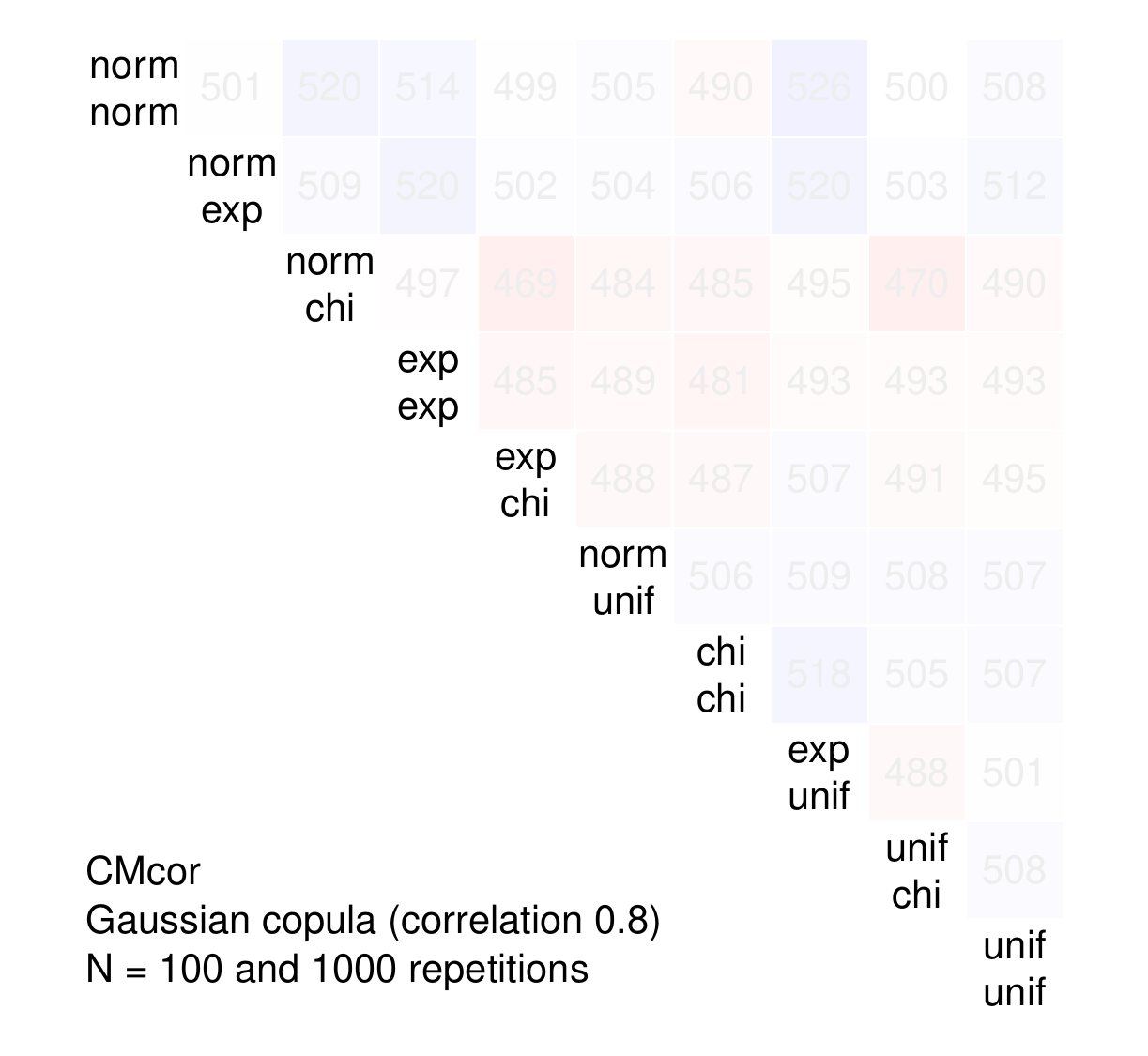}

\caption{Invariance with respect to marginal distributions. Same as Figures \ref{fig:pair-freq} and \ref{fig:pair-freq-unbiased} but using the copula version of distance multicorrelation. Observation: There is no systematic dependence on the marginal distributions for the copula version of distance multicorrelation.}		\label{fig:pair-freq-cop}
\end{figure}

The copula version of distance multicorrelation does not require that the initial marginals have a continuous distribution. For discontinuous distributions it fixes (compare with Remark \ref{rem:distribution-free}) the underlying copula to be the multilinear extension copula, see \cite{Boet2020a} for details. This is in some sense the choice which introduces as much independence as possible while preserving an underlying dependence in the sample. In fact any other choice would fail to preserve independence. As a consequence of this the value obtained for the dependence of discrete (especially Bernoulli) samples is not always appropriate for a comparison with cases with continuous marginals. In these cases it might be understood as a lower bound. A possible upper bound could be obtained by using the same uniform sample in the distributional transform for each element, but in general this upper-bound would fail to characterize independence.

\begin{example}[Comparabilty of discrete distributions] In the setting of Example \ref{ex:changeofdistribution} one gets 
\begin{align} 
\vapprox{CMcor(X_{exp},Y_{chi})}{0.74} &= \vapprox{CMcor(X_{norm},Y_{chi})}{0.74} = \vapprox{CMcor(X_{unif},Y_{chi})}{0.74}\\
&> \vapprox{CMcor(X_{exp},Y_{bern})}{0.57} = \vapprox{CMcor(X_{chi},Y_{bern})}{0.57}.
\end{align}
Which illustrates that the distributional transform yields smaller values for discrete marginals. For a multivariate example see Figure \ref{fig:multivariate-copula}. Thus also the copula version of distance multicorrelation can cause systematic errors in general. But the class for which such errors can occur is much smaller than in the case without the distributional transform.
\end{example}

 \begin{figure}[H]
 \setlength{\myl}{0.99\textwidth}
 \centering
 \includegraphics[width = \myl]{./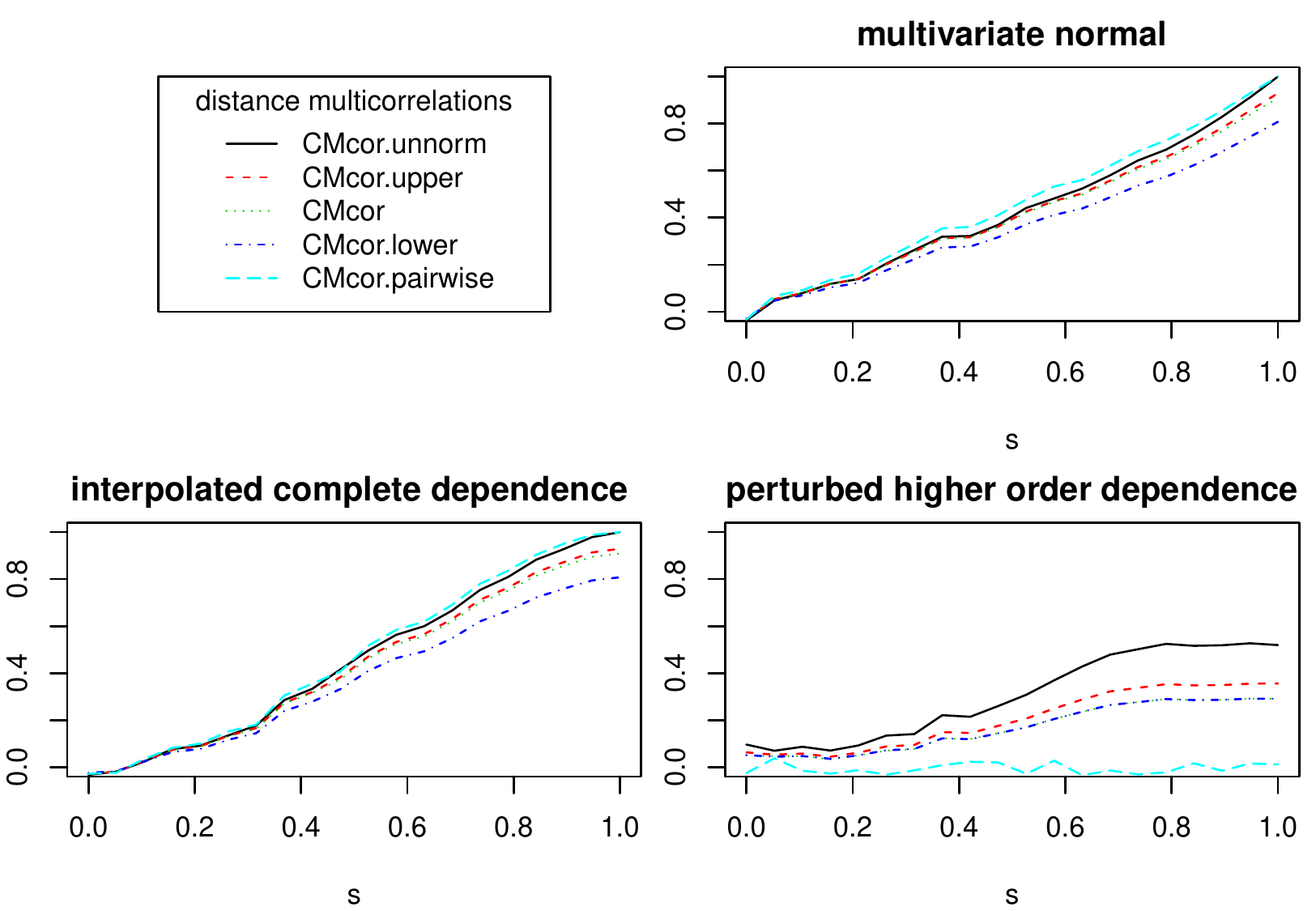}
 
 \caption{Illustrations of the values of copula distance multicorrelations for Example \ref{ex:multivariate}. Observation: The values are similar to those without the distributional transform (Figure \ref{fig:multivariate}). In the case of higher order dependent Bernoulli variables here all measures are bounded by 1 an their values are less than those in Figure \ref{fig:multivariate}.}		\label{fig:multivariate-copula}
 \end{figure}

\section{Conclusion} \label{sec:conclusion}

Several aspects have been discussed in the previous sections. Some key observations are:

\begin{itemize}
\item 
{\bf Correlation} can not characterize independence. It measures in some sense linear dependence, and hence it is (only) in a linear regression approach an appropriate measure. In general, values besides -1 and 1 only provide a somewhat arbitrary scale / order of dependence. Moreover its empirical estimator is (as it is the case also for all other discussed measures) biased.
\item 
{\bf Distance multicorrelation} (including {\bf distance correlation}) characterizes independence. In a regression setting its values turn out to behave similar to those of correlation, but they can also be used in general to quantify dependence. Here at least 0 (and 1 in certain settings) have clear interpretations. Nevertheless, for multivariate marginals the value depends on the dependence within the marginals, thus it does not quantify exclusively the dependence of the random vectors (this seems to apply in fact to any non-trivial multivariate dependence measure). Moreover, if the marginal distributions are varied also the values of the measure vary. Hence a direct comparison of values can yield in certain settings severe systematic errors.
\item
{\bf Copula distance multicorrelation} characterizes independence, and it is invariant with respect to changes (of continuous) marginal distributions. For univariate continuously distributed random variables it is an appropriate measure of dependence. But if the random variables under consideration are multivariate one has to keep in mind, that its value depends also on the internal dependence of the random vectors. Moreover, dependencies of non-continuous marginals are underestimated in comparison to continuous marginals.
\item
In contrast to the above measures, the corresponding {\bf p-values of the independence tests} have in any setting a clear interpretation: the likelihood of the given sample (or worse) in the case of independence of the components.
\end{itemize}

We hope that the presented discussions and examples help to understand the limitations and use of dependence measures in general and distance multicorrelation in particular. Let us close with a general appeal:
\medskip

\textit{If you teach statistics, mention at least some proper dependence measure. If you analyze dependence, use some proper dependence measure and know its limitations. If you quantify strength of dependence, be aware that only certain values have a clear interpretation -- in general the obtained values will just be numbers on a scale, which might not exclusively describe dependence. In contrast, the p-values of the corresponding independence tests contain at least well understood information on the likelihood.}

\section{Figures}
\newlength{\mylength}
\setlength{\mylength}{0.80\textwidth}

\begin{figure}[H]
\centering
\includegraphics[width = \textwidth]{./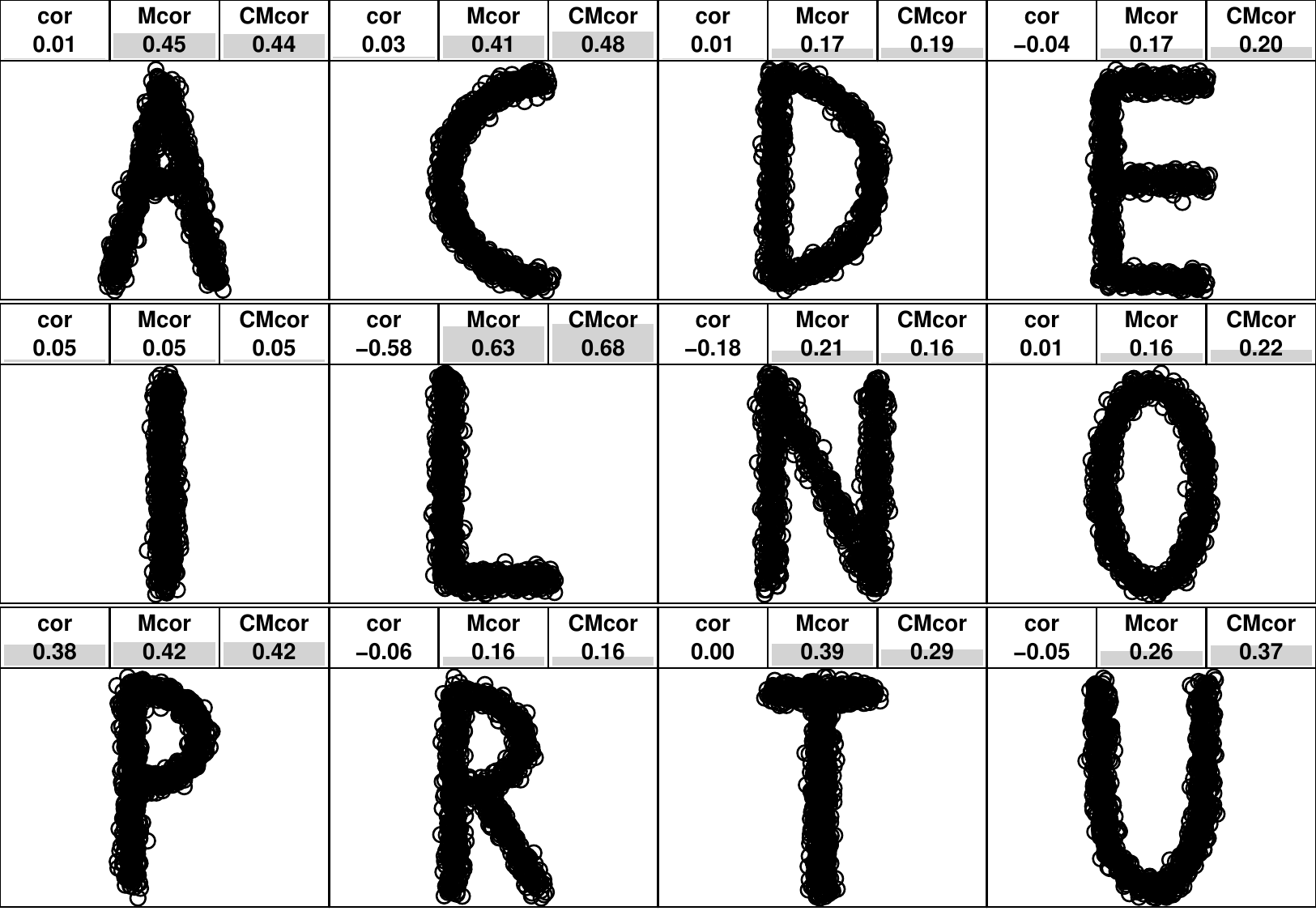}
\caption{The letters of UNCORRELATED and INDEPENDENT. Based on 1000 samples each using a uniform distribution on a line representation of the letter perturbed by a bivariate normal distribution (with independent components). Observation: In most cases Pearson's correlation fails to detect the dependencies, but they are detected by distance multicorrelation and its copula version. Theoretically only for 'I' the variables are really independent.}		\label{fig:letters}
\end{figure} 
\vfill
\newpage

\begin{figure}[H]
\centering
\includegraphics[width =\mylength]{./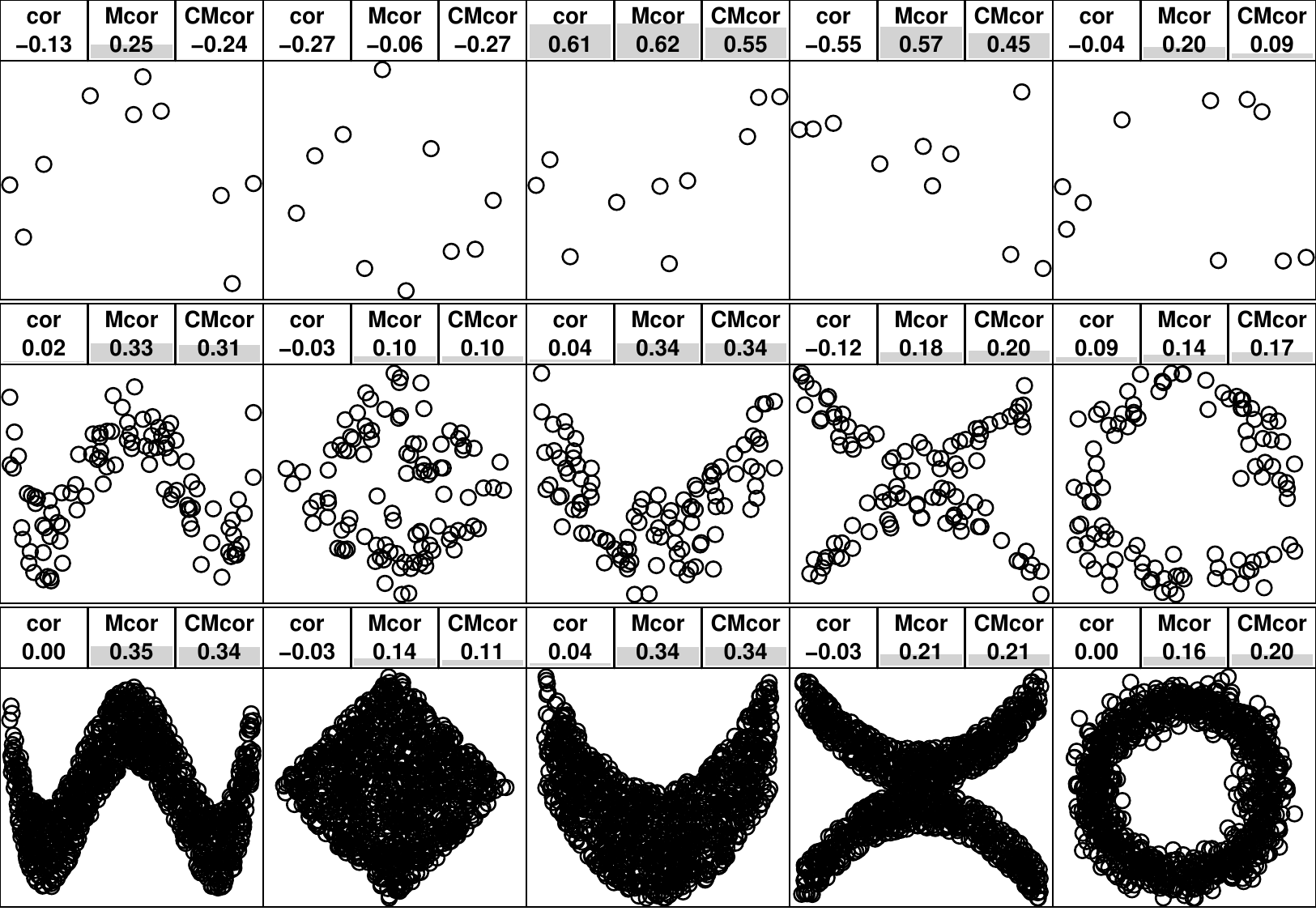}
\caption{Classical examples with 0 correlation but dependence. The rows are based on 10, 100 and 1000 samples, respectively. Observation: Pearson's correlation fails to detect the dependencies, but they are detected by distance multicorrelation and its copula version.}		\label{fig:classical-cor-dcor}
\end{figure} 

\begin{figure}[H]
\centering
\includegraphics[width =\mylength]{./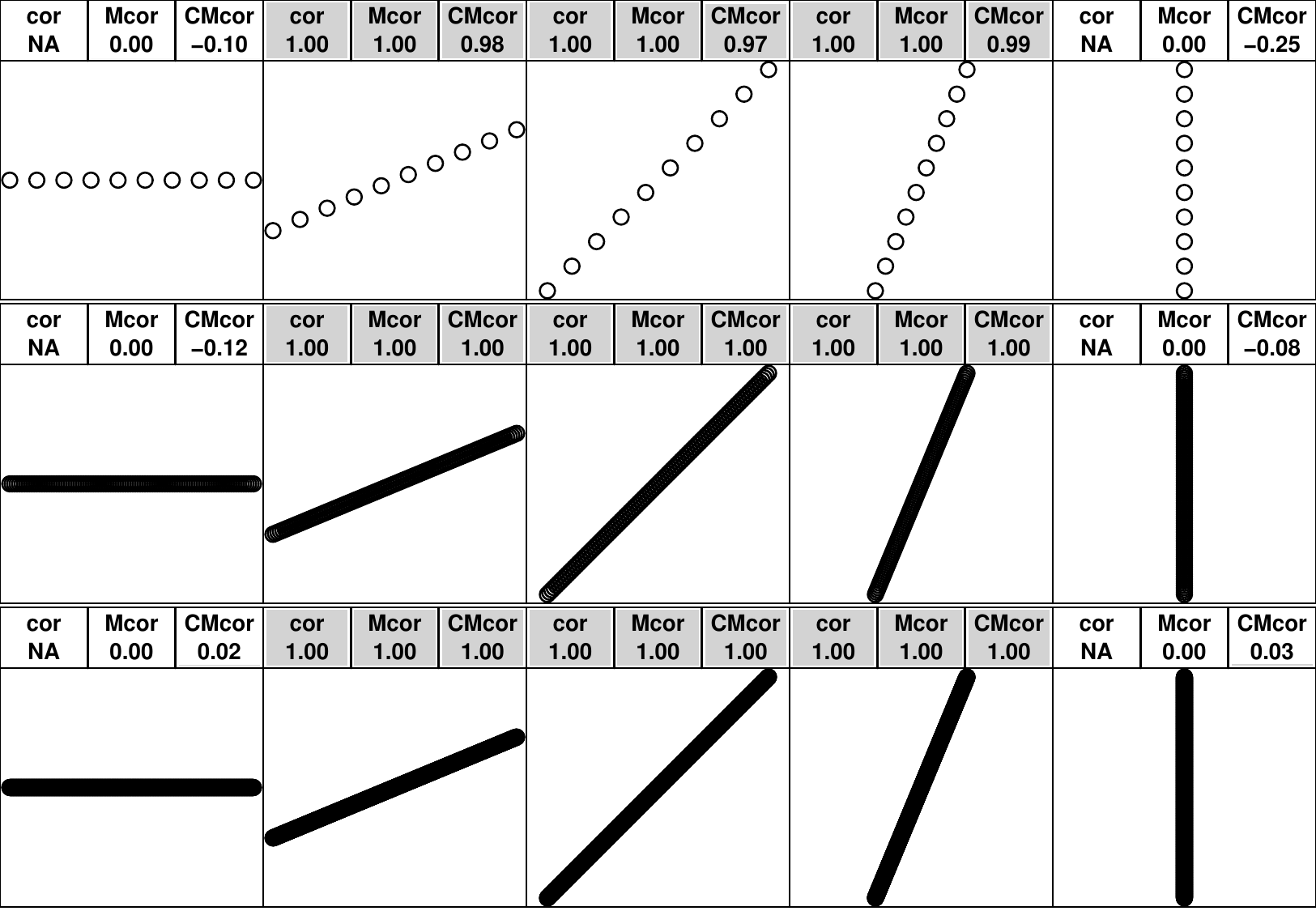}
\caption{Linearly related variables - a reminder that correlation does not measure linear relation but linear dependence. The correlation is 0 (or undefined) for the first and last column. Hence, constant variables should be removed before any analysis, since they are independent anyway. The rows are based on 10, 100 and 1000 samples, respectively.}		\label{fig:linear-relation}
\end{figure}

\begin{figure}[H]
\centering
\includegraphics[width = 0.95\textwidth]{./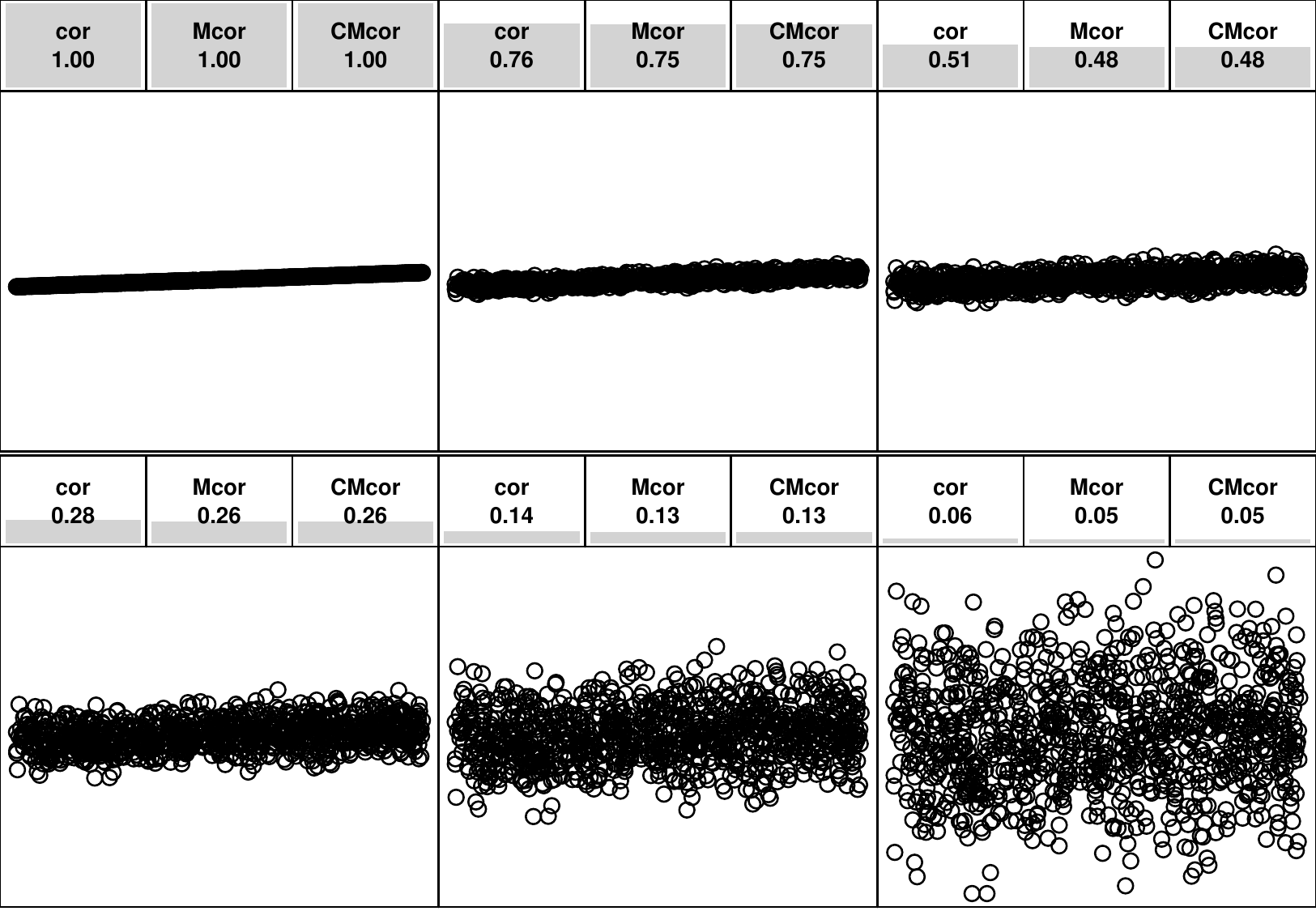}
\includegraphics[width = 0.95\textwidth]{./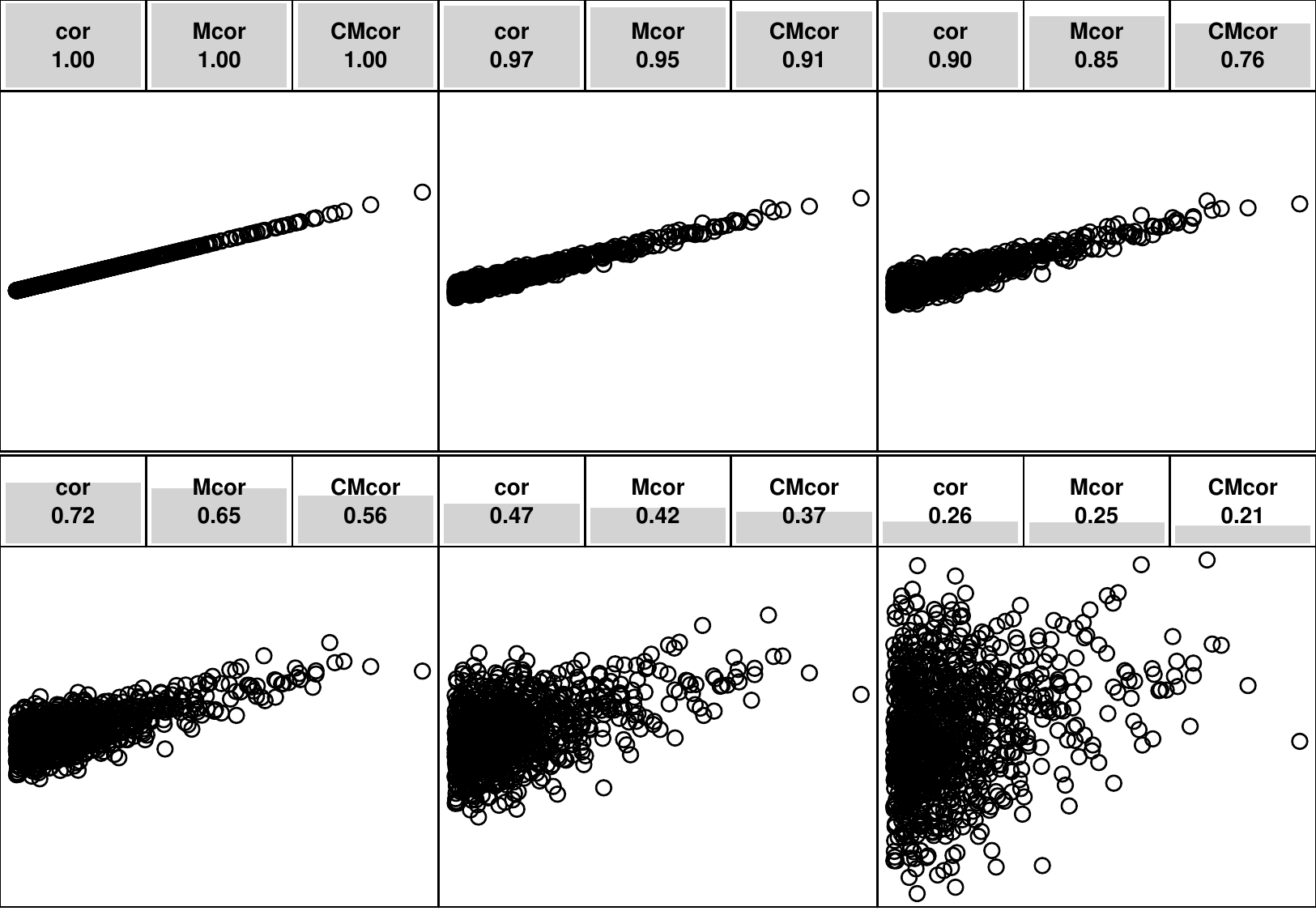}
\caption{Samples of $(X,X+rZ)$ where $X$ is uniformly distributed (first two rows) or exponentially distributed (last two rows) and $Z$ is standard normally distributed with $r = 0, 0.25, 0.5, 1, 2$ and $4$, respectively. Observation: In a regression setting Pearson's correlation seems more sensitive than the other measures. But qualitatively all behave similarly.}		\label{fig:linear-relation-perturbed}
\end{figure}

\begin{figure}[H]
\includegraphics[width = \textwidth]{./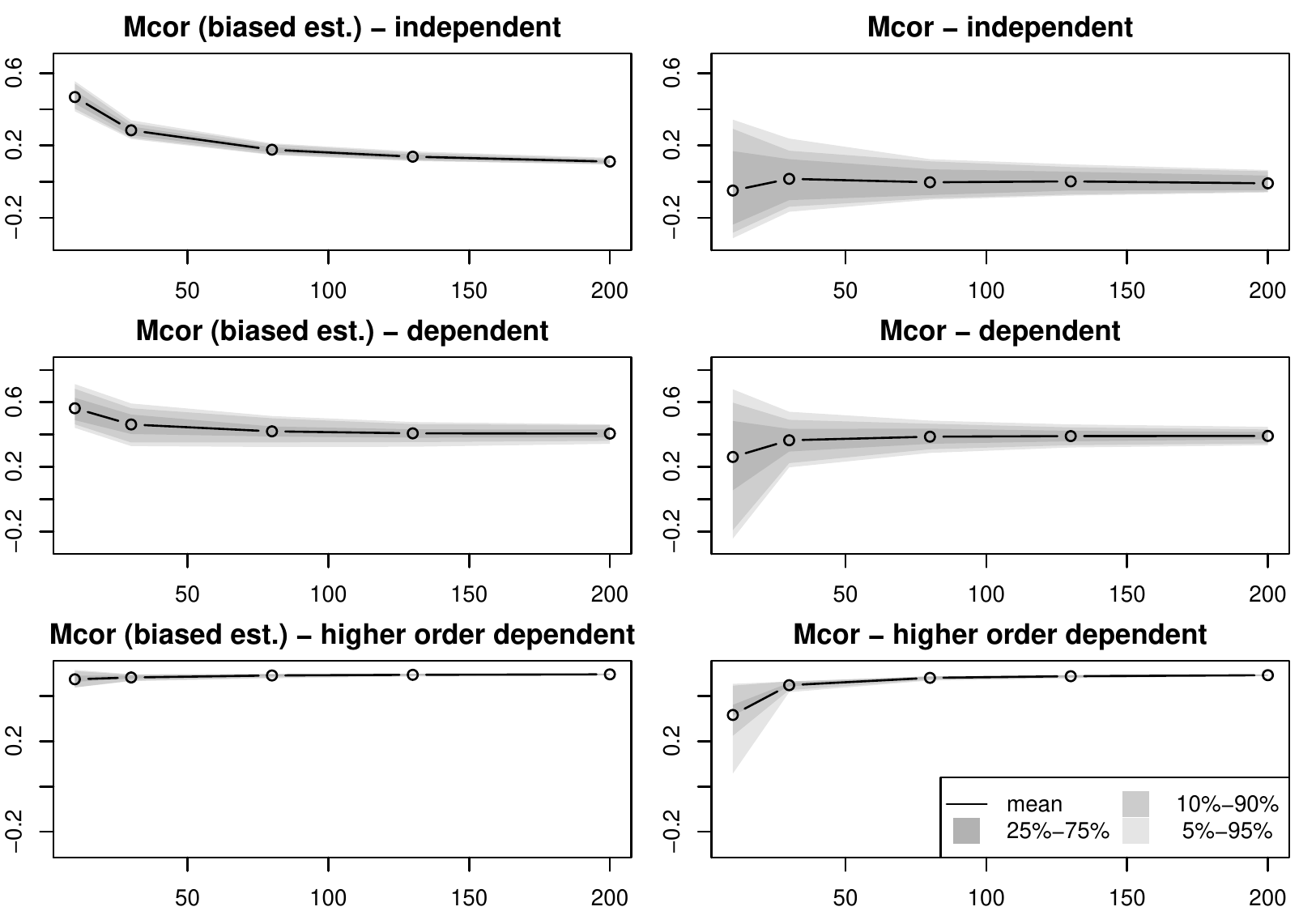}
\caption{Comparison of biased and bias corrected estimates for $n=3$. The first two rows are the examples of Figure \ref{fig:bias-2} for $n=3$, the third row is an example of three Bernoulli random variables where the first two are independent and the third is 1 if and only if the two others have the same value. Then the three variables are dependent, but pairwise independent. Observation: For the first two examples it is clearly visible that the bias corrected estimate have less bias, but they feature a larger variance. For the higher order dependence the variance difference is less dramatic.}		\label{fig:bias-multi-1}
\end{figure}

\begin{figure}[H]
\includegraphics[width = \textwidth]{./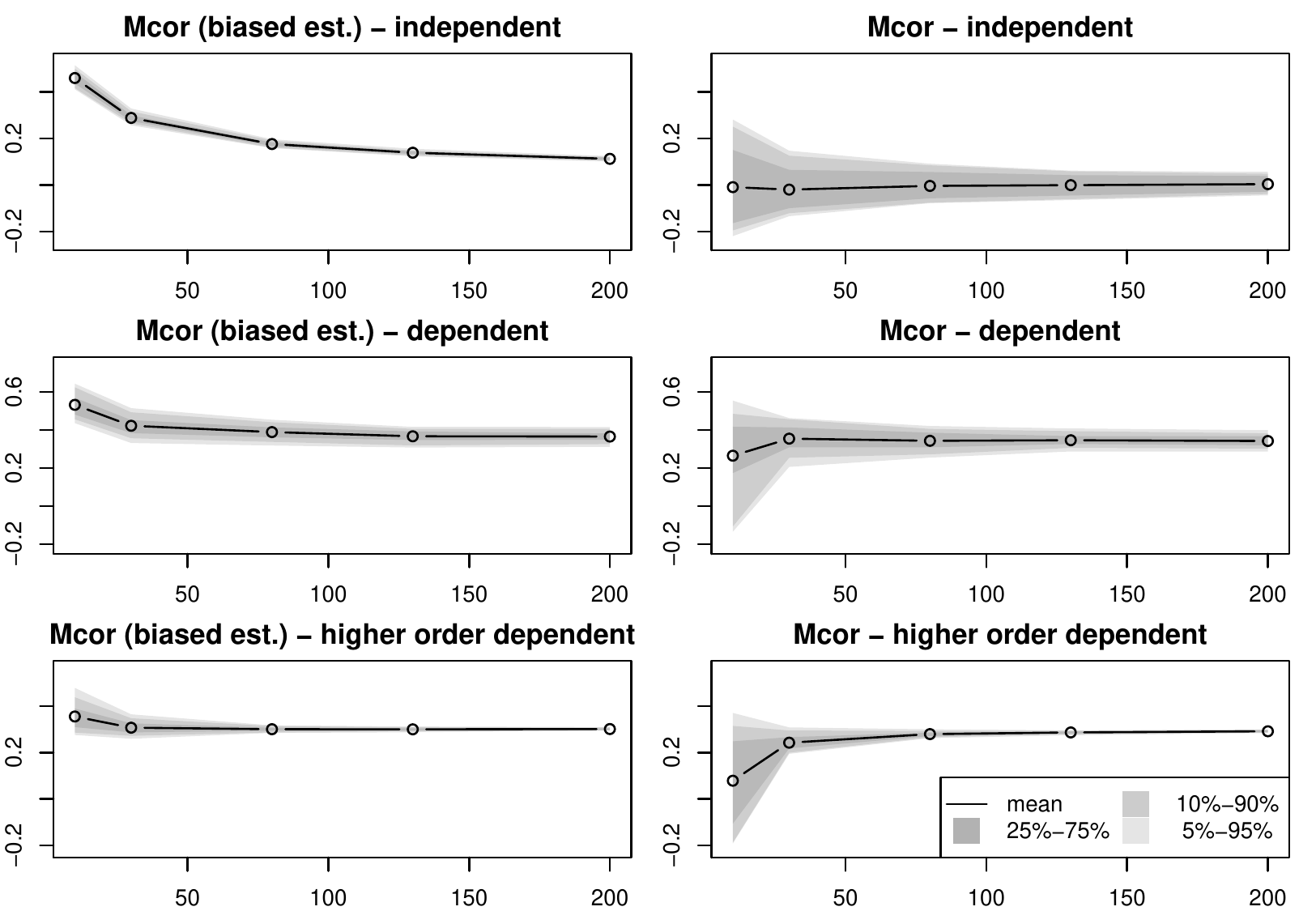}
\caption{Comparison of biased and bias corrected estimates for $n=4$. The details are the same as in Figure \ref{fig:bias-multi-1}.}		\label{fig:bias-multi-2}
\end{figure}

\begin{figure}[H]
\includegraphics[width = 0.99\textwidth]{./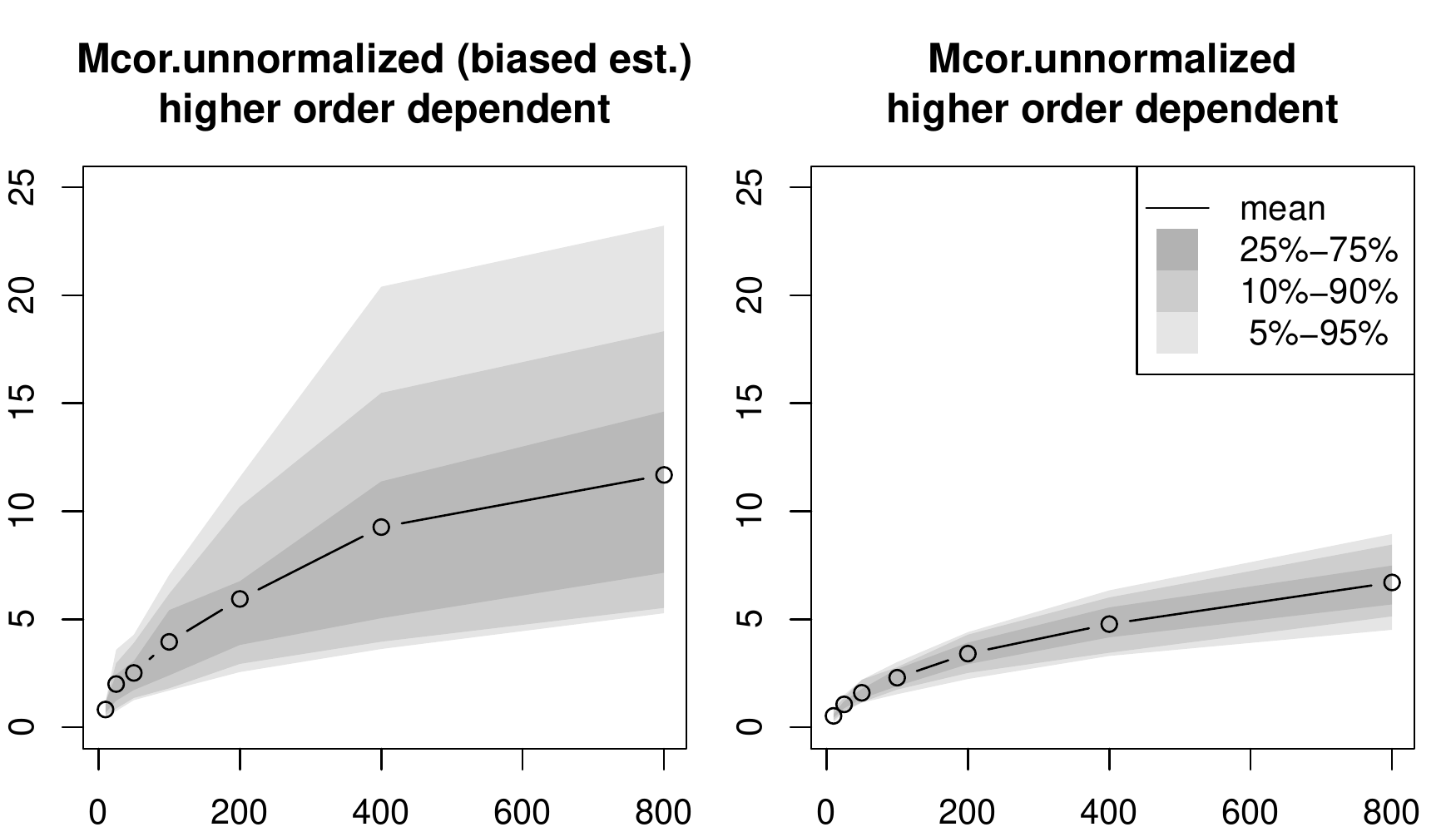}
\caption{Unnormalized total distance multicorrelation. The estimators for unnormalized distance multicorrelation show erratic behaviour for samples of the higher order dependent random variables described in Figure \ref{fig:bias-multi-1}.}		\label{fig:bias-multi-3}
\end{figure}

\begin{figure}[H]
\centering
\setlength{\myl}{0.49\textwidth}
\includegraphics[width = \myl]{./Figs/pair-frequencies-1.pdf}
\includegraphics[width = \myl]{./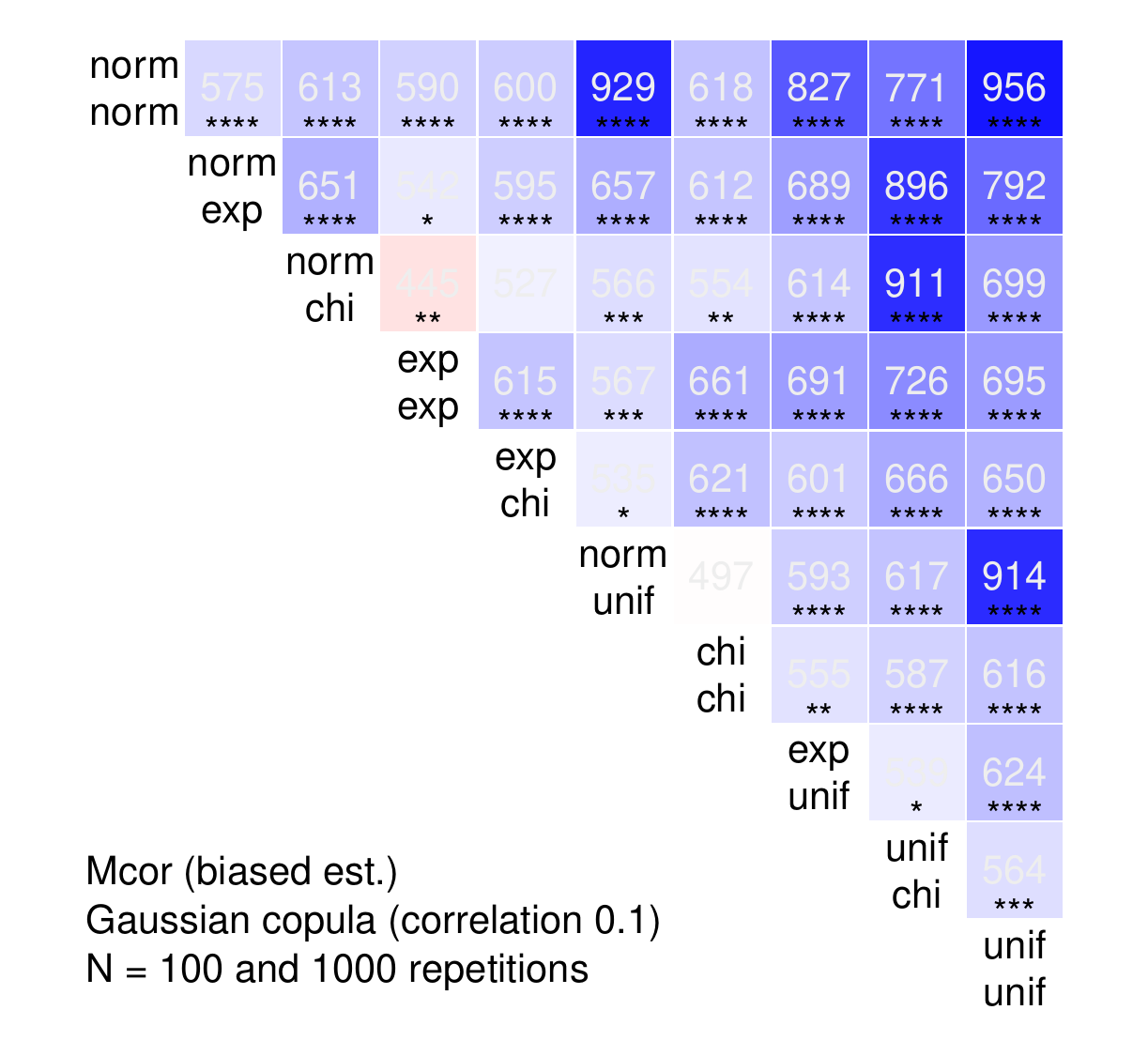}
\includegraphics[width = \myl]{./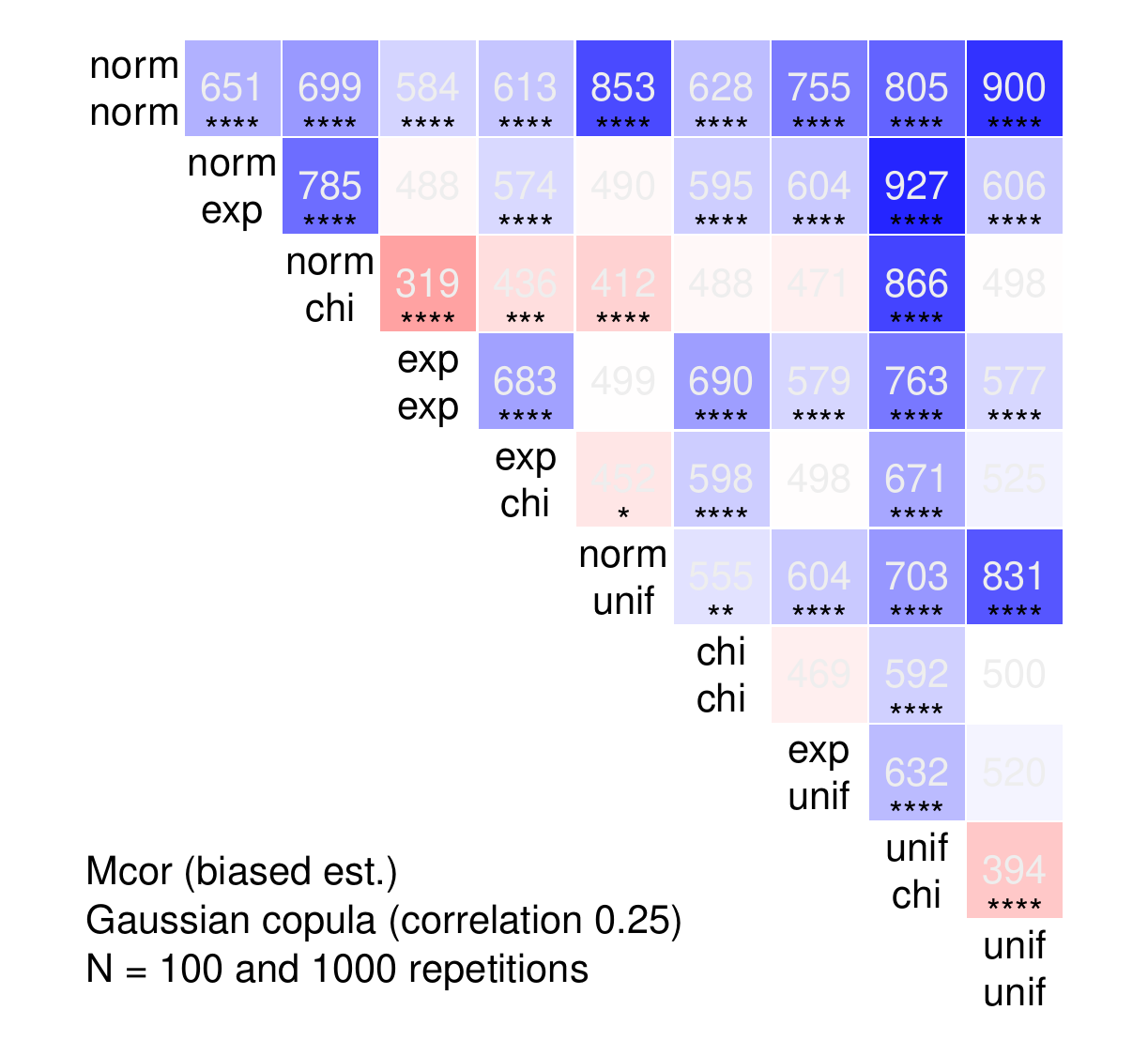} 
\includegraphics[width = \myl]{./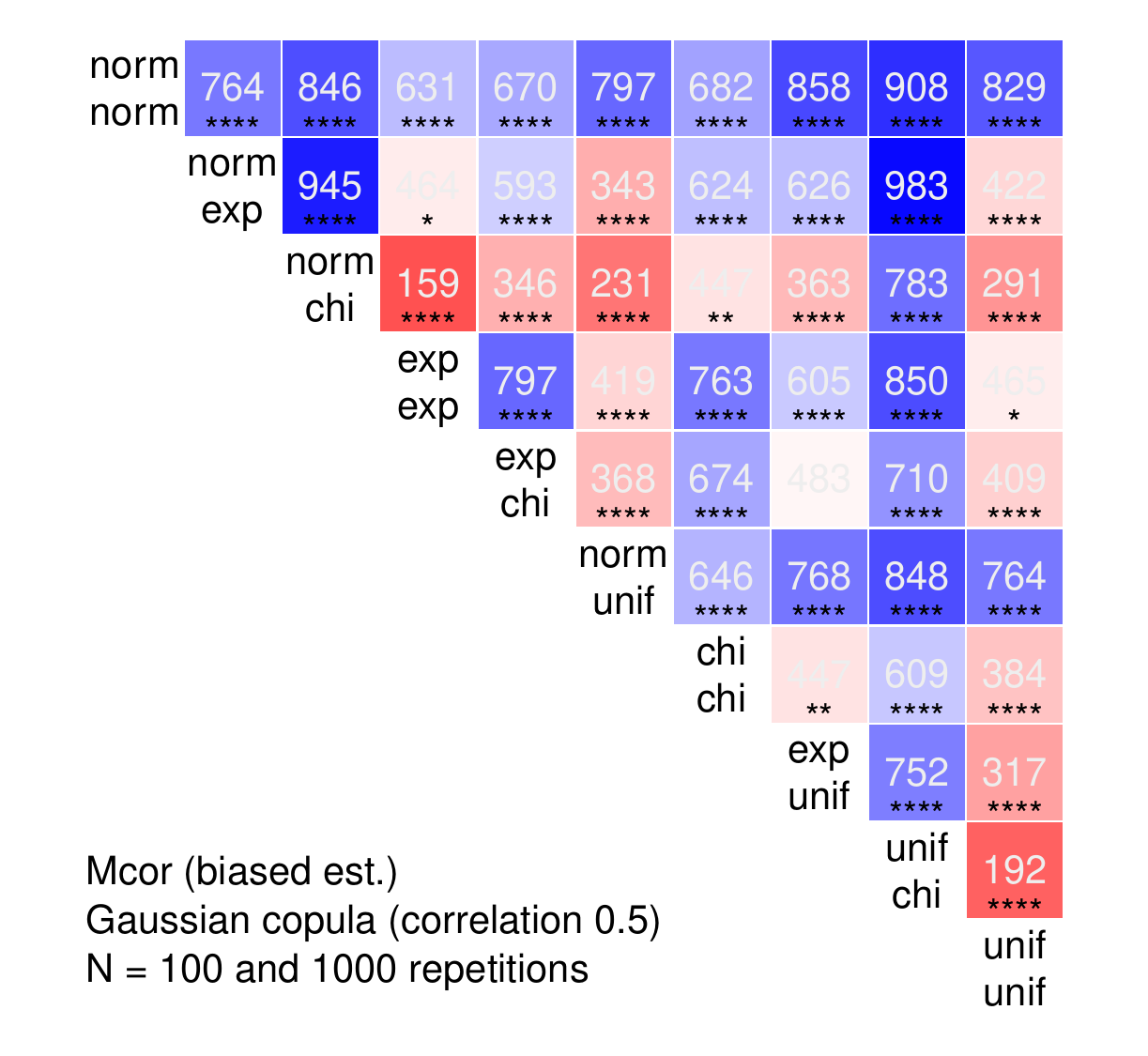}
\includegraphics[width = \myl]{./Figs/pair-frequencies-5.pdf}
\includegraphics[width = \myl]{./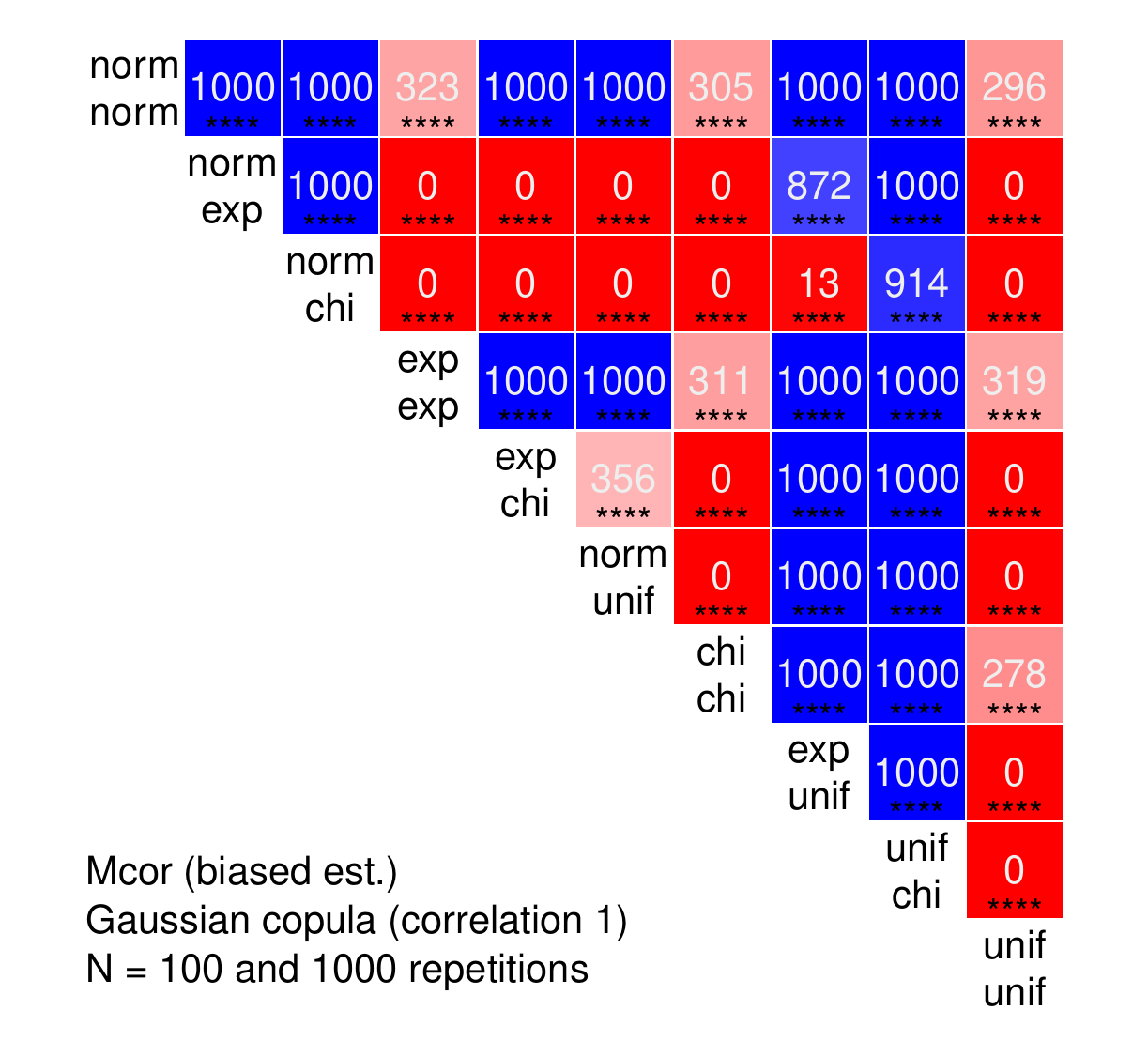} 
\caption{Systematic dominance due to marginal distributions. Extension of Figure \ref{fig:pair-freq} by considering 0, 0.1, 0.25, 0.5, 0.8 and 1 as parameters for the Gaussian copula.}		\label{fig:pair-freq-full}
\end{figure}

\begin{figure}[H]
\centering
\setlength{\myl}{0.49\textwidth}
\includegraphics[width = \myl]{./Figs/pair-frequencies-umcor-1.pdf}
\includegraphics[width = \myl]{./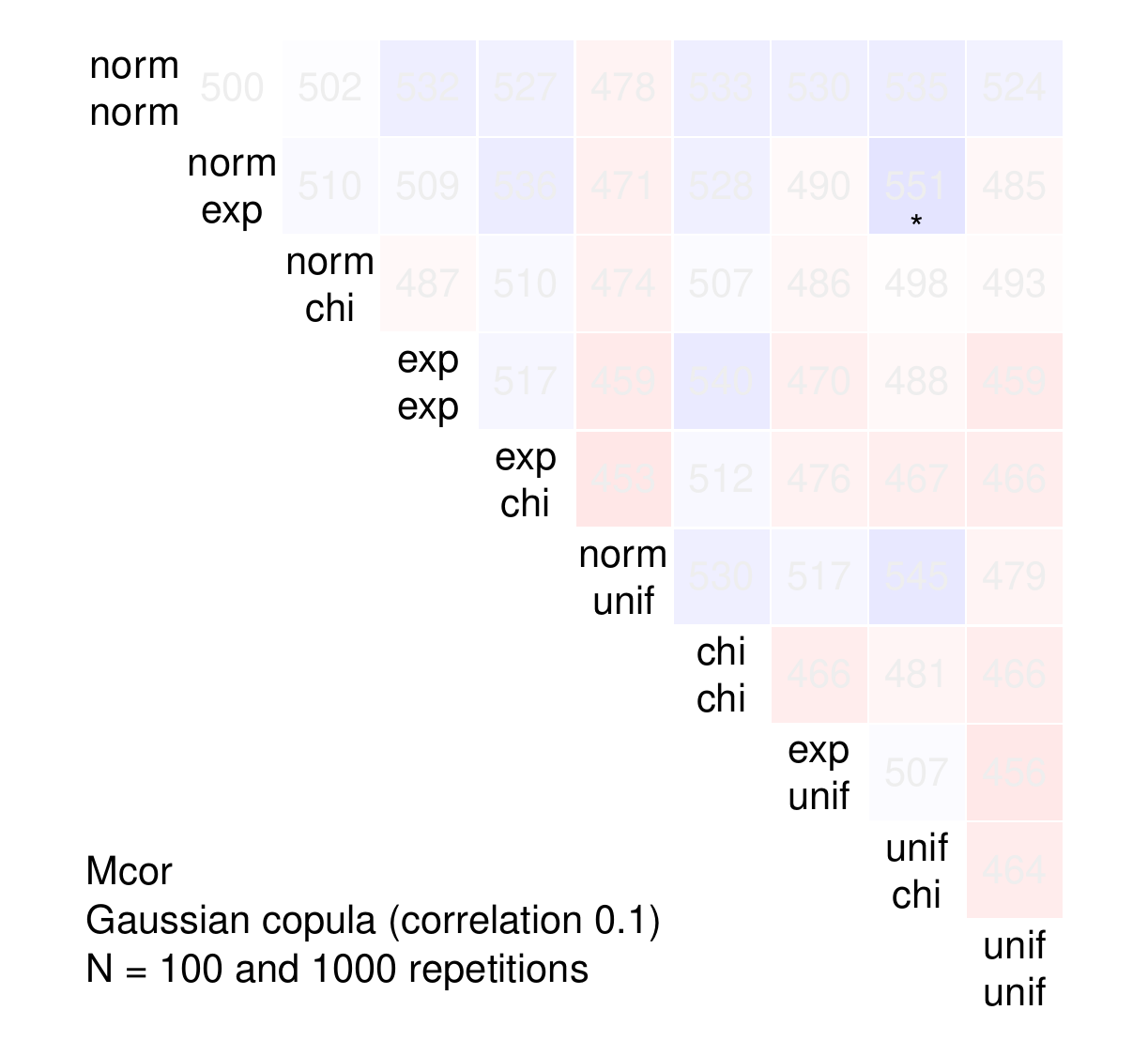}
\includegraphics[width = \myl]{./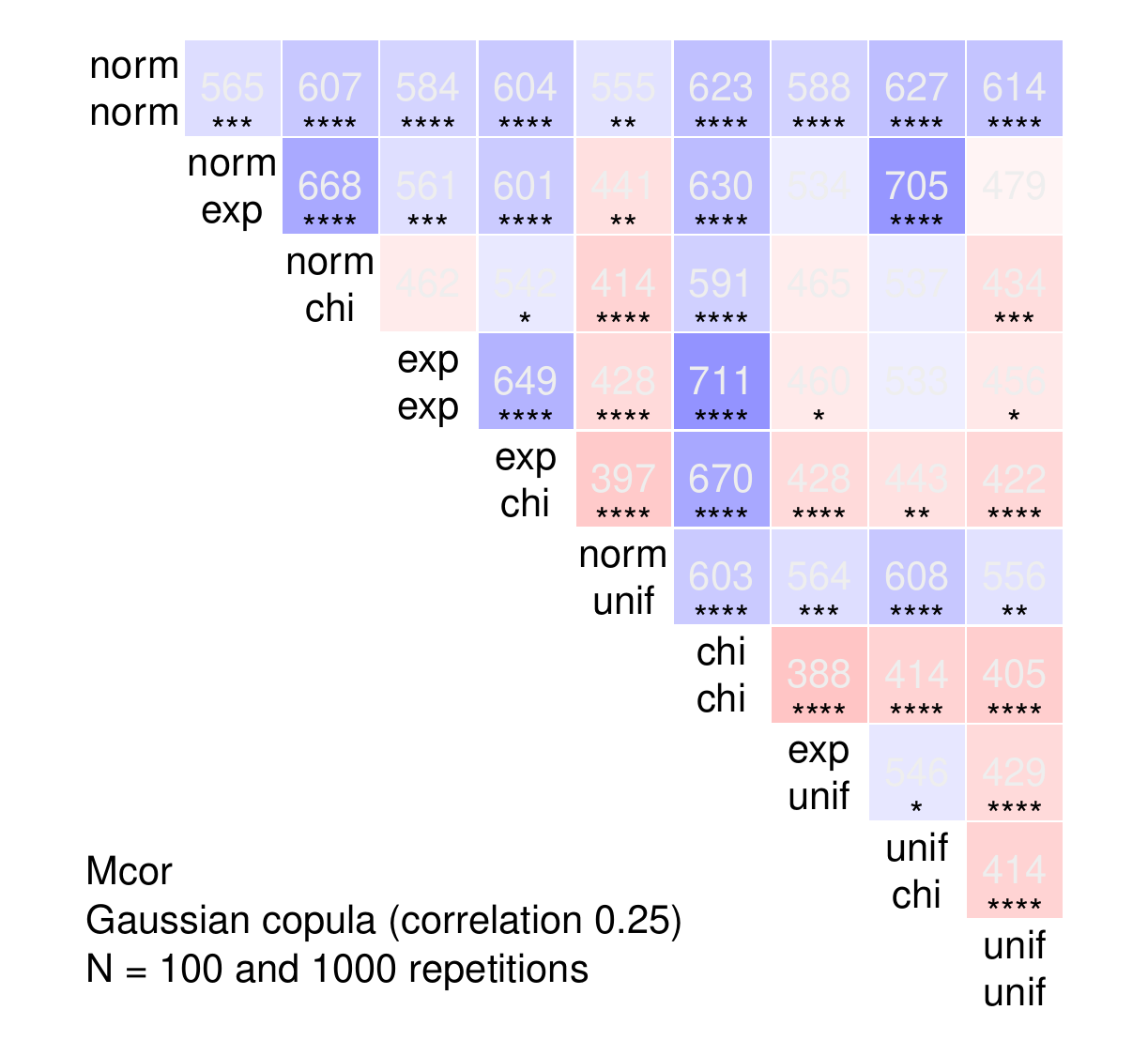} 
\includegraphics[width = \myl]{./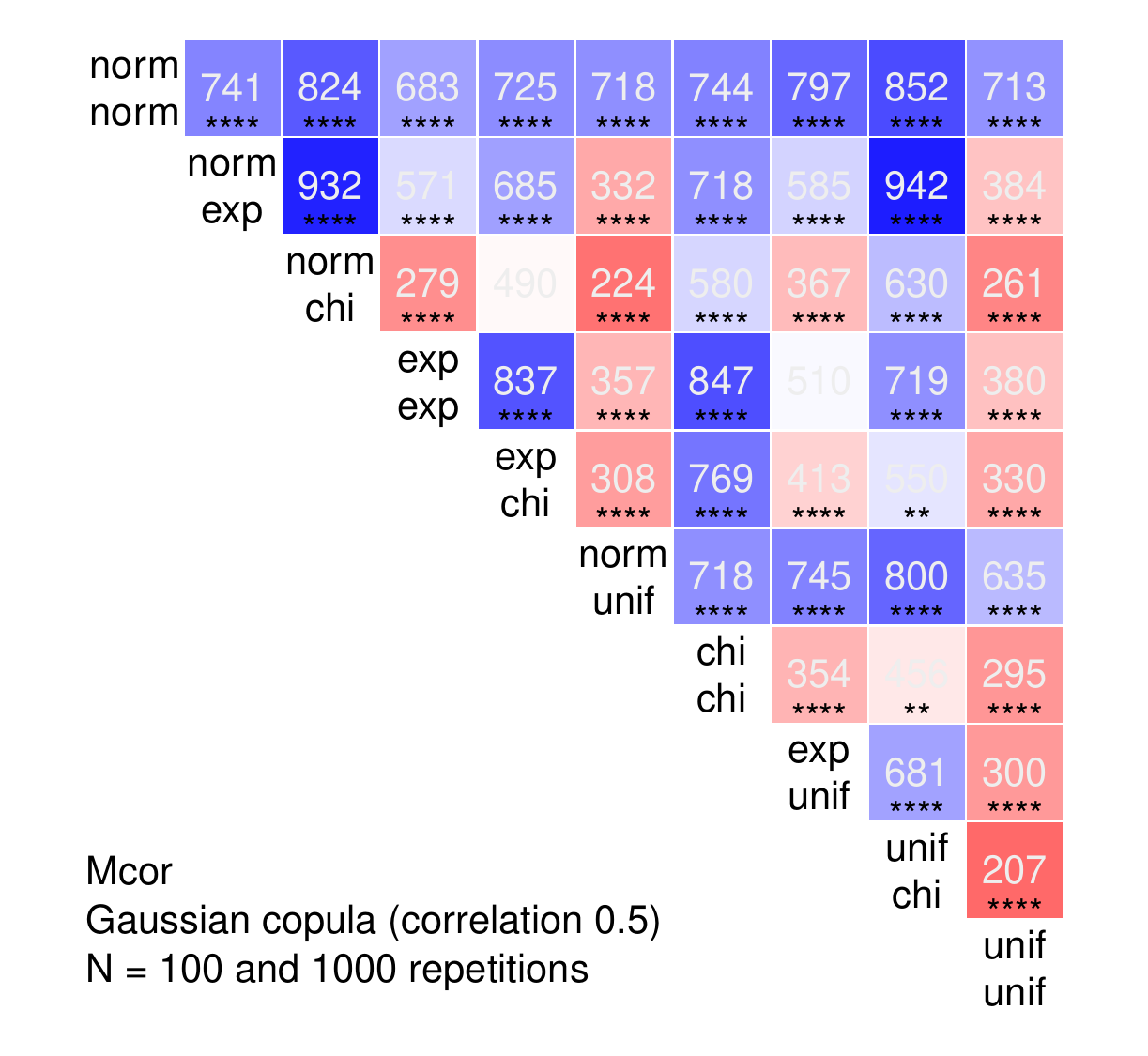}
\includegraphics[width = \myl]{./Figs/pair-frequencies-umcor-5.pdf}
\includegraphics[width = \myl]{./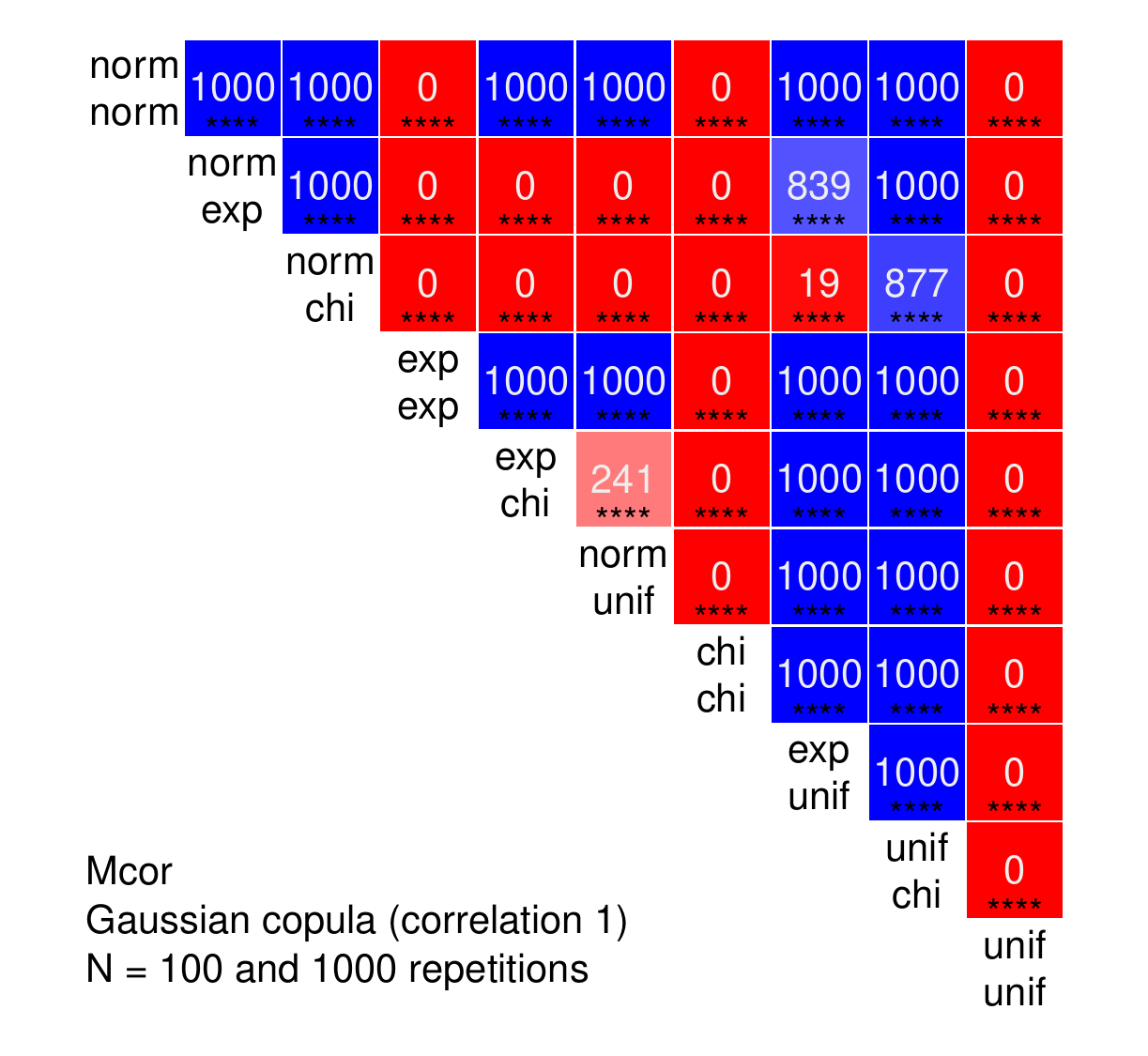}
\caption{Systematic dominance due to marginal distributions. Extension of Figure \ref{fig:pair-freq-unbiased} by considering 0, 0.1, 0.25, 0.5, 0.8 and 1 as parameters for the Gaussian copula. Observation:
The stronger the dependence is, the stronger the systematic problems appear.
}		\label{fig:pair-freq-unbiased-full}
\end{figure}

\begin{figure}[H]
\centering
\setlength{\myl}{0.49\textwidth}
\includegraphics[width = \myl]{./Figs/pair-frequencies-ucmcor-1.pdf}
\includegraphics[width = \myl]{./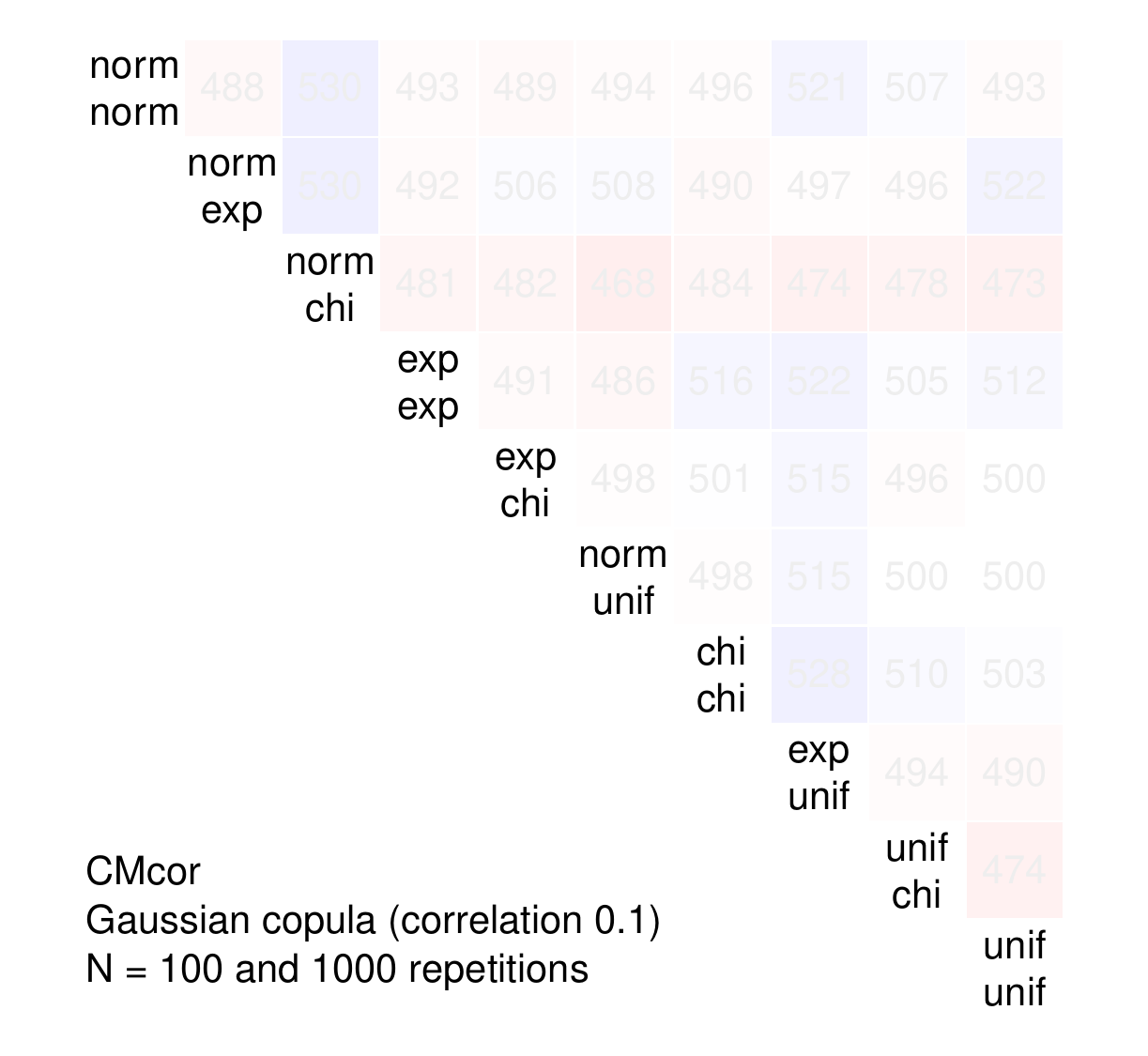}
\includegraphics[width = \myl]{./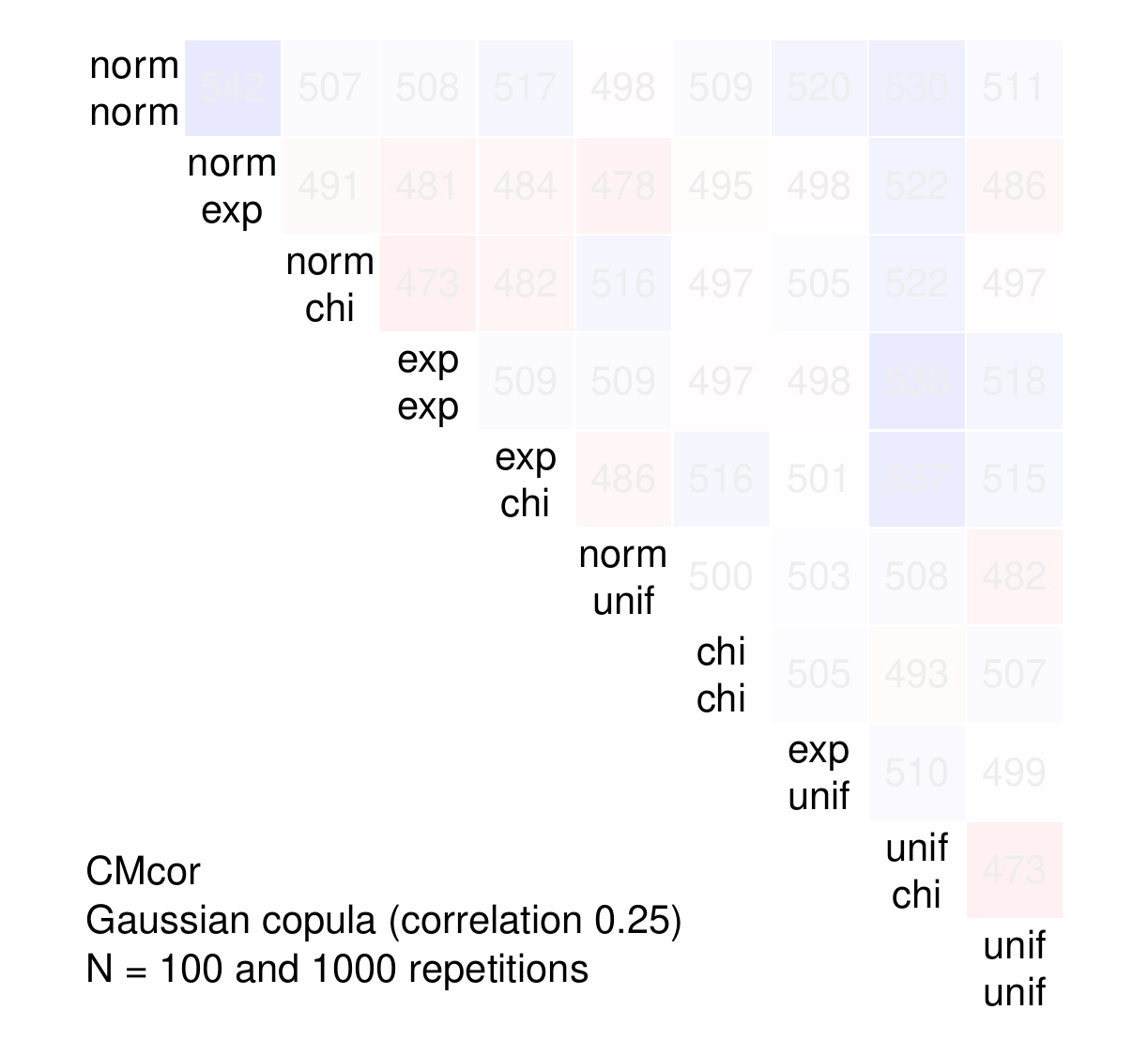} 
\includegraphics[width = \myl]{./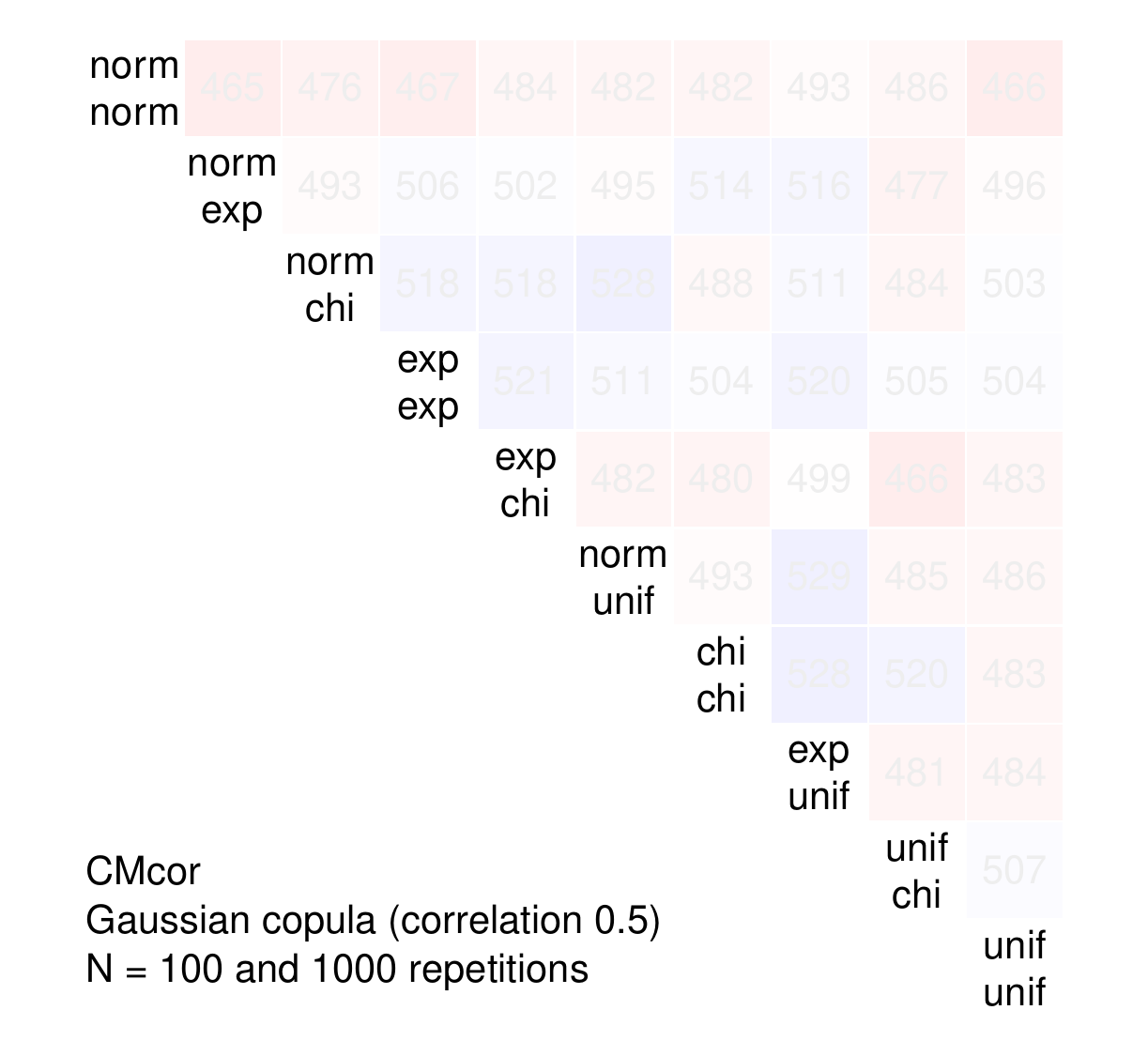}
\includegraphics[width = \myl]{./Figs/pair-frequencies-ucmcor-5.pdf}
\includegraphics[width = \myl]{./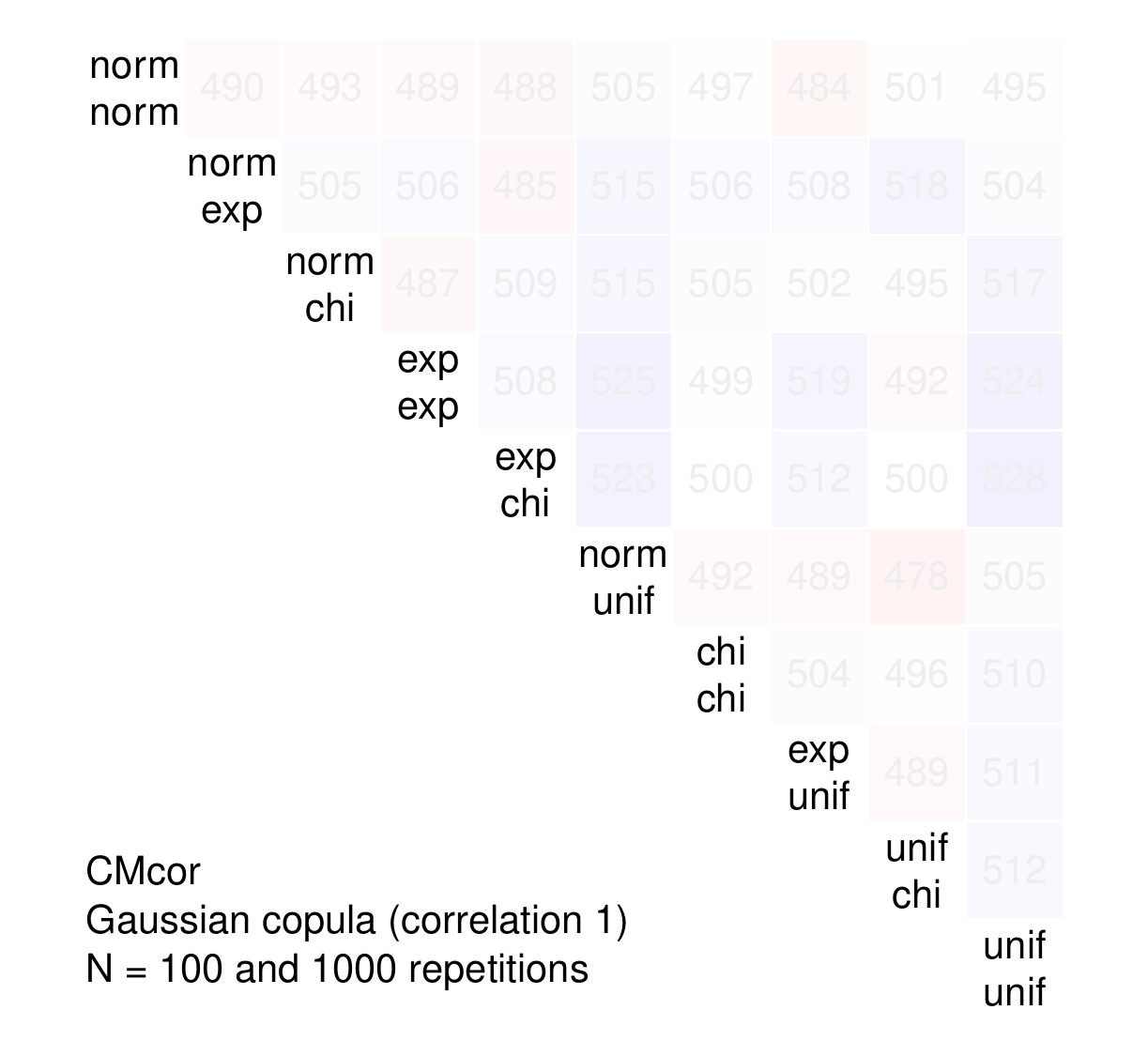}
\caption{Invariance with respect to marginal distributions. Extension of Figure \ref{fig:pair-freq-cop} by considering 0, 0.1, 0.25, 0.5, 0.8 and 1 as parameters for the Gaussian copula. Observation: There is no systematic dependence on the marginal distributions for the copula version of distance multicorrelation}		\label{fig:pair-freq-cop-full}
\end{figure}

\begin{figure}[H]
\setlength{\myl}{1\textwidth}
\includegraphics[width = \myl]{./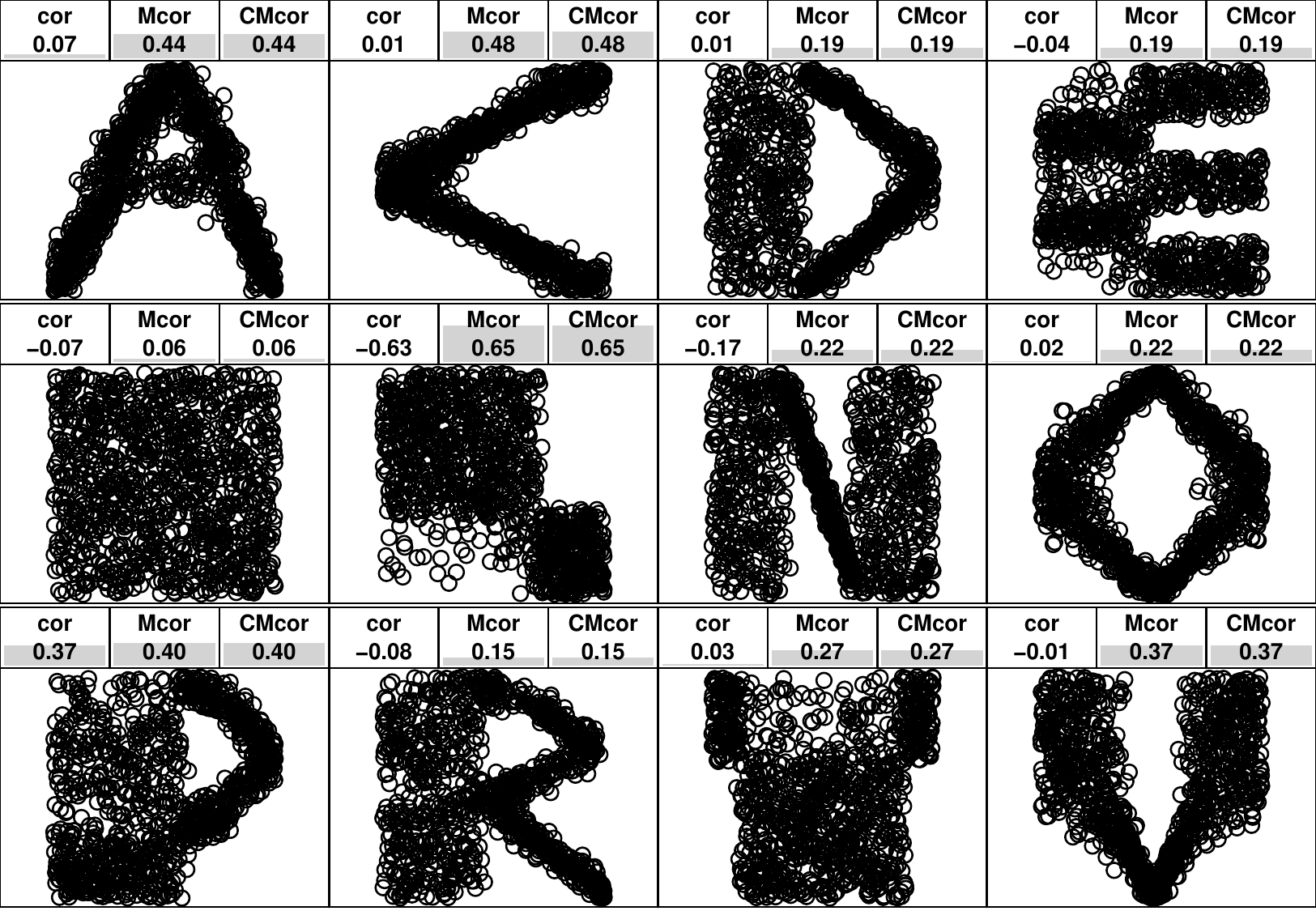}
\caption{Results of the Monte Carlo distributional transform. The datasets of Figure \ref{fig:letters} are depicted after the distributional transform has been applied, and the values of the measures are stated as in the other figures. Observation: Also for the transformed samples Pearson's correlation is not able to detect most dependencies. Moreover, by the transformation values of discontinuous random variables become uniformly distributed over rectangular areas.}		\label{fig:dist-trans-letter}
\end{figure}


\begin{thebibliography}{16}
\providecommand{\natexlab}[1]{#1}
\providecommand{\url}[1]{\texttt{#1}}
\expandafter\ifx\csname urlstyle\endcsname\relax
  \providecommand{\doi}[1]{doi: #1}\else
  \providecommand{\doi}{doi: \begingroup \urlstyle{rm}\Url}\fi

\bibitem[Anscombe(1973)]{Ansc1973}
F.~J. Anscombe.
\newblock Graphs in statistical analysis.
\newblock \emph{The American Statistician}, 27\penalty0 (1):\penalty0 17--21,
  1973.

\bibitem[Berg and Forst(1975)]{BergFors75}
C.~Berg and G.~Forst.
\newblock \emph{{P}otential {T}heory on {L}ocally {C}ompact {A}belian
  {G}roups}.
\newblock Springer, Berlin, 1975.

\bibitem[B{\"o}ttcher(2020{\natexlab{a}})]{Boet2020}
B.~B{\"o}ttcher.
\newblock Dependence and dependence structures: estimation and visualization
  using the unifying concept of distance multivariance.
\newblock \emph{Open Statistics}, 1\penalty0 (1):\penalty0 1--46,
  2020{\natexlab{a}}.
\newblock ISSN 2657-3601.
\newblock \doi{10.1515/stat-2020-0001}.

\bibitem[B{\"o}ttcher(2020{\natexlab{b}})]{Boet2020a}
B.~B{\"o}ttcher.
\newblock Copula versions of distance multivariance and dhsic via the
  distributional transform – a general approach to construct invariant
  dependence measures.
\newblock \emph{Statistics}, 0\penalty0 (0):\penalty0 1--18,
  2020{\natexlab{b}}.
\newblock \doi{10.1080/02331888.2020.1748029}.

\bibitem[B{\"o}ttcher et~al.(2018)B{\"o}ttcher, Keller-Ressel, and
  Schilling]{BoetKellSchi2018}
B.~B{\"o}ttcher, M.~Keller-Ressel, and R.~L. Schilling.
\newblock Detecting independence of random vectors: {G}eneralized distance
  covariance and {G}aussian covariance.
\newblock \emph{Modern Stochastics: Theory and Applications}, 5\penalty0
  (3):\penalty0 353--383, 2018.

\bibitem[B{\"o}ttcher et~al.(2019)B{\"o}ttcher, Keller-Ressel, and
  Schilling]{BoetKellSchi2019}
B.~B{\"o}ttcher, M.~Keller-Ressel, and R.~L. Schilling.
\newblock {D}istance multivariance: New dependence measures for random vectors.
\newblock \emph{The Annals of Statistics}, 47\penalty0 (5):\penalty0
  2757--2789, 2019.

\bibitem[Chakraborty and Zhang(2019)]{ChakZhan2019}
S.~Chakraborty and X.~Zhang.
\newblock Distance metrics for measuring joint dependence with application to
  causal inference.
\newblock \emph{Journal of the American Statistical Association}, 114\penalty0
  (528):\penalty0 1638--1650, 2019.
\newblock \doi{10.1080/01621459.2018.1513364}.

\bibitem[Gretton et~al.(2005)Gretton, Bousquet, Smola, and
  Scholkopf]{GretBousSmolScho2005}
A.~Gretton, O.~Bousquet, A.~Smola, and B.~Scholkopf.
\newblock Measuring statistical dependence with hilbert-schmidt norms.
\newblock In \emph{ALT}, volume~16, pages 63--77. Springer, 2005.

\bibitem[M{\'o}ri and Sz{\'e}kely(2018)]{MoriSzek2018}
T.~F. M{\'o}ri and G.~J. Sz{\'e}kely.
\newblock Four simple axioms of dependence measures.
\newblock \emph{Metrika}, 82:\penalty0 1--16, 2018.
\newblock ISSN 1435-926X.
\newblock \doi{10.1007/s00184-018-0670-3}.

\bibitem[Ne{\v{s}}lehov{\'a}(2007)]{Nes2007}
J.~Ne{\v{s}}lehov{\'a}.
\newblock On rank correlation measures for non-continuous random variables.
\newblock \emph{Journal of Multivariate Analysis}, 98\penalty0 (3):\penalty0
  544--567, 2007.

\bibitem[R{\'e}nyi(1959)]{Reny1959}
A.~R{\'e}nyi.
\newblock On measures of dependence.
\newblock \emph{Acta mathematica hungarica}, 10\penalty0 (3-4):\penalty0
  441--451, 1959.

\bibitem[Robert and Escoufier(1976)]{RobeEsco1976}
P.~Robert and Y.~Escoufier.
\newblock A unifying tool for linear multivariate statistical methods: The
  {RV}-coefficient.
\newblock \emph{Journal of the Royal Statistical Society. Series C (Applied
  Statistics)}, 25\penalty0 (3):\penalty0 257--265, 1976.
\newblock ISSN 00359254, 14679876.
\newblock URL \url{http://www.jstor.org/stable/2347233}.

\bibitem[R{\"u}schendorf(2009)]{Rues2009}
L.~R{\"u}schendorf.
\newblock On the distributional transform, sklar's theorem, and the empirical
  copula process.
\newblock \emph{Journal of Statistical Planing and Inference}, 139\penalty0
  (11):\penalty0 3921--3927, 2009.
\newblock Special Issue: The 8th Tartu Conference on Multivariate Statistics \&
  The 6th Conference on Multivariate Distributions with Fixed Marginals.

\bibitem[Sz{\'e}kely and Rizzo(2014)]{SzekRizz2014}
G.~J. Sz{\'e}kely and M.~L. Rizzo.
\newblock Partial distance correlation with methods for dissimilarities.
\newblock \emph{The Annals of Statistics}, 42\penalty0 (6):\penalty0
  2382--2412, 2014.

\bibitem[Sz{\'e}kely et~al.(2007)Sz{\'e}kely, Rizzo, and
  Bakirov]{SzekRizzBaki2007}
G.~J. Sz{\'e}kely, M.~L. Rizzo, and N.~K. Bakirov.
\newblock {M}easuring and testing dependence by correlation of distances.
\newblock \emph{The Annals of Statistics}, 35\penalty0 (6):\penalty0
  2769--2794, 2007.

\bibitem[Zimmerman et~al.(2003)Zimmerman, Zumbo, and
  Williams]{ZimmZumbWill2003}
D.~W. Zimmerman, B.~D. Zumbo, and R.~H. Williams.
\newblock Bias in estimation and hypothesis testing of correlation.
\newblock \emph{Psicol{\'o}gica}, 24\penalty0 (1), 2003.

\end{thebibliography}
\end{document}